\newacronym{6g}{6G}{sixth-generation}
\newacronym{5g}{5G}{fifth-generation}
\newacronym{mimo}{MIMO}{multiple-input multiple-output}
\newacronym{rf}{RF}{radio frequency}
\newacronym{adc}{ADC}{analog-to-digital converter}
\newacronym{dac}{DAC}{digital-to-analog converter}
\newacronym{svd}{SVD}{singular value decomposition}
\newacronym{irs}{IRS}{intelligent reflecting surface}
\newacronym{ris}{RIS}{reconfigurable intelligent surface}
\newacronym{bd-ris}{BD-RIS}{beyond diagonal RIS}
\newacronym{sim}{SIM}{stacked intelligent metasurface}
\newacronym{milac}{MiLAC}{microwave linear analog computer}
\newacronym{awgn}{AWGN}{additive white Gaussian noise}
\newacronym{iid}{i.i.d.}{independent and identically distributed}
\newacronym{snr}{SNR}{signal-to-noise ratio}
\newacronym{em}{EM}{electromagnetic}
\newacronym{dma}{DMA}{dynamic metasurface antenna}
\newacronym{bd-dma}{BD-DMA}{beyond diagonal DMA}
\newacronym{zfbf}{ZFBF}{zero-forcing beamforming}
\newacronym{mmse}{MMSE}{minimum mean square error}
\newacronym{qpsk}{QPSK}{quadrature phase shift keying}
\newacronym{dsa}{DSA}{dynamic scattering array}
\newtheorem{definition}{Definition}
\newtheorem{proposition}{Proposition}
\newtheorem{corollary}{Corollary}
\newtheorem{remark}{Remark}
\definecolor{myred}{RGB}{184, 84, 80}
\definecolor{mygreen}{RGB}{130, 179, 102}
\definecolor{myyellow}{RGB}{214, 182, 86}
\definecolor{myblue}{RGB}{108, 142, 191}
\begin{document}
\bstctlcite{BSTcontrol}

\title{MIMO Systems Aided by Microwave Linear Analog Computers: Capacity-Achieving Architectures with Reduced Circuit Complexity}

\author{Matteo~Nerini,~\IEEEmembership{Member,~IEEE}, and
        Bruno~Clerckx,~\IEEEmembership{Fellow,~IEEE}

\thanks{This work has been supported in part by UKRI under Grant EP/Y004086/1, EP/X040569/1, EP/Y037197/1, EP/X04047X/1, EP/Y037243/1.
Corresponding author: Bruno Clerckx.}
\thanks{Matteo Nerini and Bruno Clerckx are with the Department of Electrical and Electronic Engineering, Imperial College London, SW7 2AZ London, U.K. (e-mail: m.nerini20@imperial.ac.uk; b.clerckx@imperial.ac.uk).}
\thanks{Bruno Clerckx is also with the Department of Electronic Engineering, Kyung Hee University, Yongin-si, Gyeonggi-do 17104, South Korea.}}

\maketitle

\begin{abstract}
To meet the demands of future wireless networks, antenna arrays must scale from massive \gls{mimo} to gigantic \gls{mimo}, involving even larger numbers of antennas.
To address the hardware and computational cost of gigantic \gls{mimo}, several strategies are available that shift processing from the digital to the analog domain.
Among them, \glspl{milac} offer a compelling solution by enabling fully analog beamforming through reconfigurable microwave networks.
Prior work has focused on fully-connected \glspl{milac}, whose ports are all interconnected to each other via tunable impedance components.
Although such \glspl{milac} are capacity-achieving, their circuit complexity, given by the number of required impedance components, scales quadratically with the number of antennas, limiting their practicality.
To solve this issue, in this paper, we propose a graph theoretical model of \gls{milac} facilitating the systematic design of lower-complexity \gls{milac} architectures.
Leveraging this model, we propose stem-connected \glspl{milac} as a family of \gls{milac} architectures maintaining capacity-achieving performance while drastically reducing the circuit complexity.
Besides, we optimize stem-connected \glspl{milac} with a closed-form capacity-achieving solution.
Our theoretical analysis, confirmed by numerical simulations, shows that stem-connected \glspl{milac} are capacity-achieving, but with circuit complexity that scales linearly with the number of antennas, enabling high-performance, scalable, gigantic \gls{mimo}.
\end{abstract}

\glsresetall

\begin{IEEEkeywords}
Capacity, gigantic MIMO, graph theory, microwave linear analog computer (MiLAC)
\end{IEEEkeywords}

\section{Introduction}

As wireless systems evolve toward \gls{6g}, the demand for ultra-high data rates, low latency, and pervasive connectivity continues to increase, pushing beyond the capabilities of \gls{5g} technologies.
While massive \gls{mimo} systems, typically including tens of antennas, have proven successful in \gls{5g} \cite{boc14,lar14}, \gls{6g} requires a further substantial increase in the number of antennas.
This emerging paradigm, denoted as gigantic \gls{mimo}, envisions \gls{mimo} architectures with antenna arrays scaling to the order of thousands of antennas \cite{qua22,bjo24}.
Such extreme scaling offers significant enhancements in spatial multiplexing and beamforming resolution, which are critical to meet \gls{6g} targets in the anticipated upper mid-band spectrum (7-24~GHz).
However, traditional digital \gls{mimo} architectures are not suited to this scale due to the prohibitive cost, power consumption, and complexity of assigning one \gls{rf} chain per antenna, each requiring high-resolution \glspl{adc}/\glspl{dac} and mixers.
This challenge has steered research efforts toward analog-domain solutions, which can provide more scalable, cost-effective, and energy-efficient alternatives to fully digital \gls{mimo} architectures.

Hybrid digital-analog \gls{mimo} architectures originally emerged to reduce the demand of \gls{rf} chains while preserving high performance through the use of analog phase shifters \cite{aya14,soh16}.
Another technology classically used to change the radiation pattern of an antenna array operating in the analog domain is reconfigurable antennas \cite{hau13}.
The idea of reconfigurable antennas traces back to the 1930s, when they were realized through mechanically moving parts.
Since the 1990s, reconfigurable antennas have been realized by interconnecting parts of the antenna through tunable components, such as varactors, mechanical \gls{rf} switches, or semiconductor \gls{rf} switches.
To synergize the benefits of hybrid beamforming and the recent advances in reconfigurable antennas, the tri-hybrid \gls{mimo} architecture has been recently proposed \cite{cas25}, where the signal is processed in three stages: digitally, in the analog domain by phase shifts, and in the \gls{em} domain by reconfigurable antennas.

Beamforming in the analog domain can also be achieved through the use of \glspl{irs}, also known as \glspl{ris}.
A \gls{ris} is a surface made of a multitude of elements with reconfigurable scattering capabilities, which can be integrated at the transceiver device to reconfigure its effective radiation pattern \cite{wu21,hua23}.
To enhance the flexibility of \gls{ris}, the concept of \gls{bd-ris} has emerged, where the \gls{ris} elements are interconnected to each other through tunable interconnections \cite{she22,li24,mis24}.
Furthermore, the use of multiple stacked \glspl{ris} has been explored to increase adaptability, giving rise to \gls{sim} technology \cite{an23,an25,liu25a,liu25b}.
While \cite{wu21}-\cite{liu25b} have considered reconfigurable surfaces to manipulate the incident \gls{em} signal, \glspl{dma} and \gls{dsa} have emerged to manipulate how the \gls{em} signal is radiated and received \cite{shl21,dar24}.
\Glspl{bd-dma} have been recently proposed as an advanced version of \glspl{dma}, where tunable interconnections between the meta-atoms enable reconfigurable coupling and hence increase the performance \cite{pro25}.

While analog-domain solutions are particularly appealing for gigantic \gls{mimo} beamforming, they commonly suffer from two limitations.
First, they typically still rely on digital pre-processing at the transmitter and digital post-processing at the receiver, which could require large hardware and computational complexity in gigantic \gls{mimo}.
Second, beamforming in the analog domain generally suffers from limited reconfigurability and cannot achieve the same performance as fully digital beamforming.
To address both limitations, \gls{milac} has recently emerged as a promising approach for gigantic \gls{mimo} systems \cite{ner25-1,ner25-2}.
A \gls{milac} is a multiport microwave network made of tunable impedance components, having input and output ports.
At the input ports, a signal is applied, which is processed by the \gls{milac} in the analog domain as it propagates within the microwave network.
Thus, at the output ports, the processed signal can be read.
Exploiting these analog processing capabilities, we can use a \gls{milac} at the transmitter side to precode the transmitted symbols by feeding them to the \gls{milac} input ports through \gls{rf} chains, and connecting the \gls{milac} output ports to the transmitting antennas.
Similarly, at the receiver side, we can use a \gls{milac} to combine the received signal by connecting the receiving antennas to the \gls{milac} input ports and sampling the signal at the \gls{milac} output ports via \gls{rf} chains.

By precoding and combining the signal fully in the analog domain, \gls{milac} offers five distinct advantages for gigantic \gls{mimo} in terms of hardware and computational complexity \cite{ner25-2}.
\textit{First}, it offers the same flexibility as digital beamforming, and hence maximum performance.
\textit{Second}, it only requires as many \gls{rf} chains as the transmitted symbols, i.e., data streams.
\textit{Third}, it can operate with low-resolution \glspl{adc} and \glspl{dac}, since the signals carried and sampled at the \gls{rf} chains are the symbols up to a scaling factor, which lie in a constellation with limited cardinality.
For example, in the case of \gls{qpsk}, the signals can be chosen from the constellation $\{+1+j,+1-j,-1-j,-1+j\}$, up to a scaling factor, which can be generated through \glspl{dac} having just 1-bit resolution for both in-phase and quadrature parts of the signal.
\textit{Fourth}, it does not require any computation at each symbol time, as the symbols are precoded and combined fully in the analog domain.
Consequently, the symbol vector does not need to be multiplied by a beamforming matrix, unlike in digital or hybrid beamforming.
\textit{Fifth}, at each channel coherence time, it can reconfigure the beamforming matrix for \gls{zfbf} with significantly reduced computational complexity, as shown in \cite{ner25-2}.
In detail, \gls{zfbf} and \gls{mmse} combining can be executed with a computational complexity growing quadratically with the number of antennas, rather than cubically as in digital \gls{mimo} \cite{ner25-2}.

To analyze the fundamental performance limits of \gls{milac}, in \cite{ner25-1} and \cite{ner25-2} it has been assumed that its tunable impedance components can be arbitrarily reconfigured.
A more practical design has been considered in \cite{ner25-3}, where a \gls{milac} made of lossless and reciprocal tunable impedance components has been considered, implementable with varactor diodes or PIN diodes \cite{hau13}.
Interestingly, even under the lossless and reciprocal constraints, it has been shown that \gls{milac}-aided beamforming can achieve the same capacity as fully digital beamforming.
However, the number of tunable components included in the \gls{milac} architecture considered in previous work \cite{ner25-1,ner25-2,ner25-3} grows quadratically with the number of antennas and streams, which could become prohibitive for very large numbers of antennas and/or streams.
To address this issue, in this paper, we propose a novel family of \gls{milac} architectures with a significantly reduced circuit complexity, which are proven to remain capacity-achieving when made of lossless and reciprocal components.
Specifically, the contribution of this paper can be summarized as follows.

\textit{First}, we propose a model of \gls{milac} based on graph theory in Section~\ref{sec:graph}.
With this model, we represent any \gls{milac} architecture through a graph, whose vertices correspond to the \gls{milac} ports and edges correspond to the tunable admittance components interconnecting them.
While prior work has focused on fully-connected \glspl{milac}, where every port is interconnected to all other ports, our graph theoretical modeling enables the exploration of different \gls{milac} architectures, with sparse interconnections and therefore reduced circuit complexity.

\textit{Second}, we characterize a family of \gls{milac} architectures, named stem-connected \glspl{milac}, which is proven to achieve capacity in point-to-point \gls{mimo} systems in Section~\ref{sec:architectures}.
Compared to a fully-connected \gls{milac}, a stem-connected \gls{milac} has a significantly reduced circuit complexity, measured by the number of tunable admittance components interconnecting its ports.
While the circuit complexity of a fully-connected \gls{milac} is $\mathcal{O}((N_S+N_T)^2)$, where $N_S$ is the number of transmitted streams and $N_T$ is the number of transmitting antennas, the circuit complexity of a stem-connected \gls{milac} is $\mathcal{O}(N_SN_T)$.

\textit{Third}, we optimize stem-connected \glspl{milac} at both transmitter and receiver sides in Sections~\ref{sec:tx} and \ref{sec:rx}, respectively, and demonstrate that they can achieve the capacity for any channel realization.
To this end, we first formulate a feasibility check problem whose solution, if it exists, is proven to satisfy a sufficient and necessary condition for achieving capacity.
We then solve the obtained problem in closed form, proving that stem-connected \glspl{milac} are capacity-achieving.

\textit{Fourth}, we provide numerical results in Section~\ref{sec:results} to support our theoretical findings.
Monte Carlo simulations confirm that a stem-connected \gls{milac} achieves channel capacity for any channel realization, as a fully-connected \gls{milac}.
At the same time, the stem-connected \gls{milac} architecture offers significantly lower circuit complexity.
While the number of tunable admittance components in a fully-connected \gls{milac} scales quadratically with the number of antennas, it scales only linearly in a stem-connected \gls{milac}.


\textit{Notation}:
Vectors and matrices are denoted with bold lower and bold upper letters, respectively.
Scalars are represented with letters not in bold font.
$\Re\{a\}$, $\Im\{a\}$, and $\vert a\vert$ refer to the real part, imaginary part, and absolute value of a complex scalar $a$, respectively.
$\mathbf{a}^*$, $\mathbf{a}^T$, $\mathbf{a}^H$, $[\mathbf{a}]_{i}$, and $\Vert\mathbf{a}\Vert$ refer to the conjugate, transpose, conjugate transpose, $i$th element, and $l_{2}$-norm of a vector $\mathbf{a}$, respectively.
$\mathbf{A}^*$, $\mathbf{A}^T$, $\mathbf{A}^H$, $[\mathbf{A}]_{i,k}$, $[\mathbf{A}]_{i,:}$, and $[\mathbf{A}]_{:,k}$ refer to the conjugate, transpose, conjugate transpose, $(i,k)$th element, $i$th row, and $k$th column of a matrix $\mathbf{A}$, respectively.
$[\mathbf{A}]_{\mathcal{I},\mathcal{K}}$ refers to the submatrix of $\mathbf{A}$ obtained by selecting the rows and columns indexed by the elements of the sets $\mathcal{I}$ and $\mathcal{K}$, respectively.
$\mathbb{R}$ and $\mathbb{C}$ denote the real and complex number sets, respectively.
$j=\sqrt{-1}$ denotes the imaginary unit.
$\mathbf{I}_N$ and $\mathbf{0}_N$ denote the identity matrix and the all-zero matrix with dimensions $N\times N$, respectively.
$\mathbf{0}_{M\times N}$ denotes the all-zero matrix with dimensions $M\times N$.

\section{System Model}
\label{sec:system}

\begin{figure}[t]
\centering
\includegraphics[width=0.48\textwidth]{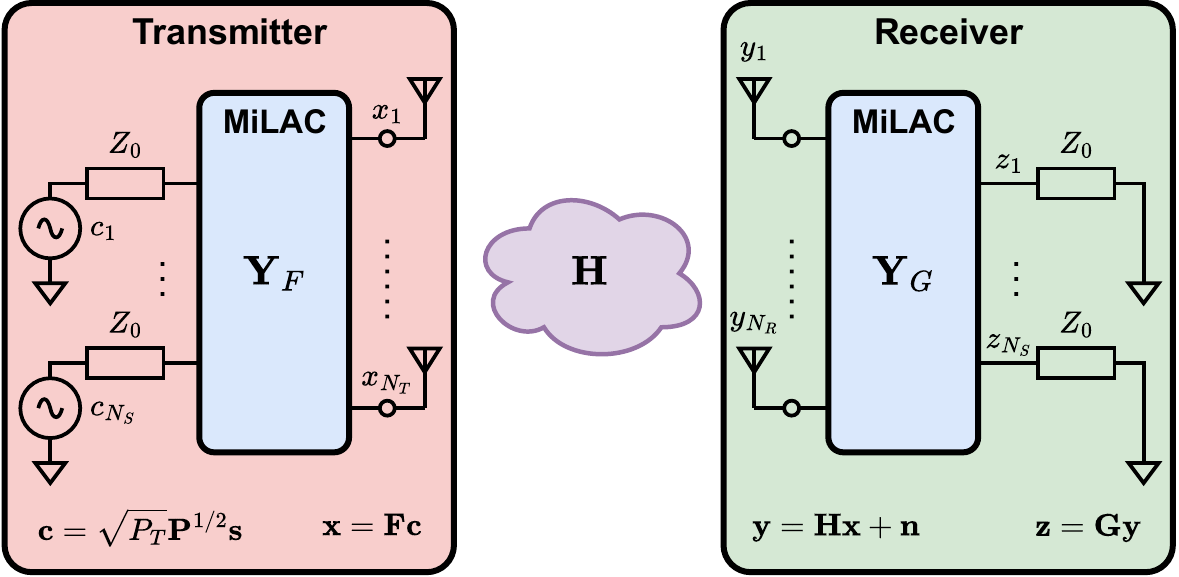}
\caption{MiLAC-aided MIMO system model.}
\label{fig:system}
\end{figure}

Consider a narrowband point-to-point \gls{mimo} system aided by a \gls{milac} at both the transmitter and receiver, as introduced in \cite[Section~II]{ner25-3} and represented in Fig.~\ref{fig:system}.
The $N_T$-antenna transmitter sends $N_S$ symbols (or streams) to the $N_R$-antenna receiver, where $N_S\leq\min\{N_T,N_R\}$.
Following the model in \cite[Section~II]{ner25-3}, the signal used for detection at the receiver $\mathbf{z}\in\mathbb{C}^{N_S\times 1}$ is
\begin{equation}
\mathbf{z}=\sqrt{P_T}\mathbf{G}\mathbf{H}\mathbf{F}\mathbf{P}^{1/2}\mathbf{s}+\mathbf{G}\mathbf{n},\label{eq:system-model}
\end{equation}
where $\mathbf{s}\in\mathbb{C}^{N_S\times1}$ is the symbol vector such that $\mathbb{E}[\mathbf{s}\mathbf{s}^H]=\mathbf{I}_{N_S}$, $\mathbf{P}^{1/2}\in\mathbb{C}^{N_S\times N_S}$ is the square root of the power allocation matrix given by $\mathbf{P}^{1/2}=\text{diag}(\sqrt{p_1},\ldots,\sqrt{p_{N_S}})$, with $p_{s}$ denoting the power allocation for the $s$th symbol such that $\sum_{s=1}^{N_S}p_{s}=1$, and $P_T$ is the transmitted power.
In addition, $\mathbf{F}\in\mathbb{C}^{N_T\times N_S}$ is the precoding matrix implemented by the transmitter-side \gls{milac}, $\mathbf{H}\in\mathbb{C}^{N_R\times N_T}$ is the wireless channel, and $\mathbf{G}\in\mathbb{C}^{N_S\times N_R}$ is the combining matrix implemented by the receiver-side \gls{milac}.
Besides, $\mathbf{n}\in\mathbb{C}^{N_R\times 1}$ is the \gls{awgn} such that $\mathbb{E}[\mathbf{n}\mathbf{n}^H]=\sigma^2\mathbf{I}_{N_R}$, with $\sigma^2$ denoting the noise power.

According to \cite{ner25-2}, the precoding matrix $\mathbf{F}$ is a function of the admittance matrix of the \gls{milac} at the transmitter $\mathbf{Y}_F\in\mathbb{C}^{(N_S+N_T)\times(N_S+N_T)}$ and writes as
\begin{equation}
\mathbf{F}=\left[\left(\frac{\mathbf{Y}_F}{Y_0}+\mathbf{I}_{N_S+N_T}\right)^{-1}\right]_{N_S+\mathcal{N}_T,\mathcal{N}_S},\label{eq:F(Y)}
\end{equation}
where $\mathcal{N}_T=\{1,\ldots,N_T\}$ and $\mathcal{N}_S=\{1,\ldots,N_S\}$\footnote{We recall that the product $\mathbf{F}\mathbf{c}$ of the precoding matrix $\mathbf{F}$ with the power-allocated symbol vector $\mathbf{c}=\sqrt{P_T}\mathbf{P}^{1/2}\mathbf{s}$ is performed in the analog domain by the \gls{milac}, which receives in input the signal $\mathbf{c}$ (see Fig.~\ref{fig:system}).}.
In words, the precoding matrix $\mathbf{F}$ is the $N_T\times N_S$ matrix obtained by selecting the last $N_T$ rows and the first $N_S$ colums of $(\mathbf{Y}_F/Y_0+\mathbf{I}_{N_S+N_T})^{-1}$.
Considering the \gls{milac} at the transmitter to be composed of a lossless and reciprocal microwave network, its admittance matrix $\mathbf{Y}_F$ is purely imaginary and symmetric according to microwave network theory \cite[Chapter~4]{poz11}, i.e., $\mathbf{Y}_F=j\mathbf{B}_F$, with $\mathbf{B}_F=\mathbf{B}_F^T$, where $\mathbf{B}_F\in\mathbb{R}^{(N_S+N_T)\times(N_S+N_T)}$ is the susceptance matrix of the \gls{milac}.
In alternative to the admittance matrix, the transmitter-side \gls{milac} can be represented with its scattering matrix 
$\boldsymbol{\Theta}_F\in\mathbb{C}^{(N_S+N_T)\times(N_S+N_T)}$, related to $\mathbf{Y}_F$ via
\begin{equation}
\boldsymbol{\Theta}_F=\left(Y_0\mathbf{I}_{N_S+N_T}+\mathbf{Y}_F\right)^{-1}\left(Y_0\mathbf{I}_{N_S+N_T}-\mathbf{Y}_F\right),\label{eq:TF(YF)}
\end{equation}
and the precoding matrix $\mathbf{F}$ in \eqref{eq:F(Y)} can be simplified as
\begin{equation}
\mathbf{F}=\frac{1}{2}\left[\boldsymbol{\Theta}_F\right]_{N_S+\mathcal{N}_T,\mathcal{N}_S},
\end{equation}
as a function of $\boldsymbol{\Theta}_F$, equivalently to \eqref{eq:F(Y)} \cite{ner25-3}.
In the case of a lossless and reciprocal \gls{milac}, its scattering matrix is unitary and symmetric, i.e., $\boldsymbol{\Theta}_F^H\boldsymbol{\Theta}_F=\mathbf{I}_{N_S+N_T}$, $\boldsymbol{\Theta}_F=\boldsymbol{\Theta}_F^{T}$ \cite{ner25-3}.

At the receiver side, the combining matrix $\mathbf{G}$ is given by
\begin{equation}
\mathbf{G}=\left[\left(\frac{\mathbf{Y}_G}{Y_0}+\mathbf{I}_{N_R+N_S}\right)^{-1}\right]_{N_R+\mathcal{N}_S,\mathcal{N}_R},\label{eq:G(Y)}
\end{equation}
where $\mathcal{N}_R=\{1,\ldots,N_R\}$, as a function of the admittance matrix of the \gls{milac} at the receiver $\mathbf{Y}_G\in\mathbb{C}^{(N_R+N_S)\times(N_R+N_S)}$.
For a lossless and reciprocal \gls{milac}, $\mathbf{Y}_G$ is subject to $\mathbf{Y}_G=j\mathbf{B}_G$, $\mathbf{B}_G=\mathbf{B}_G^T$, with $\mathbf{B}_G\in\mathbb{R}^{(N_R+N_S)\times(N_R+N_S)}$ being the susceptance matrix of the \gls{milac}.
Alternatively, we can introduce the scattering matrix of the \gls{milac} at the receiver $\boldsymbol{\Theta}_G\in\mathbb{C}^{(N_R+N_S)\times(N_R+N_S)}$ as
\begin{equation}
\boldsymbol{\Theta}_G=\left(Y_0\mathbf{I}_{N_R+N_S}+\mathbf{Y}_G\right)^{-1}\left(Y_0\mathbf{I}_{N_R+N_S}-\mathbf{Y}_G\right),\label{eq:TG(YG)}
\end{equation}
and express the combining matrix $\mathbf{G}$ as
\begin{equation}
\mathbf{G}=\frac{1}{2}\left[\boldsymbol{\Theta}_G\right]_{N_R+\mathcal{N}_S,\mathcal{N}_R}.
\end{equation}
equivalently to \eqref{eq:G(Y)} \cite{ner25-3}, where $\boldsymbol{\Theta}_G^H\boldsymbol{\Theta}_G=\mathbf{I}_{N_R+N_S}$ and $\boldsymbol{\Theta}_G=\boldsymbol{\Theta}_G^{T}$ for a lossless and reciprocal \gls{milac}.

Given the system model in \eqref{eq:system-model}, the achievable rate that can be obtained with \gls{milac}-aided beamforming is
\begin{equation}
R=\sum_{s=1}^{N_S}\log_2\left(1+\frac{P_Tp_{s}\left\vert[\mathbf{G}\mathbf{H}\mathbf{F}]_{s,s}\right\vert^2}
{P_T\sum_{t\neq s}p_{t}\left\vert[\mathbf{G}\mathbf{H}\mathbf{F}]_{s,t}\right\vert^2+\left\Vert[\mathbf{G}]_{s,:}\right\Vert^2\sigma^2}\right),\label{eq:C}
\end{equation}
where inter-stream interference is treated as noise since the receiver combines the symbols purely in the analog domain, avoiding further digital processing \cite{ner25-3}.
By optimally reconfiguring the admittance matrices of the \glspl{milac} $\mathbf{Y}_F$ and $\mathbf{Y}_G$ and the power allocations $p_1,\ldots,p_{N_S}$, the maximum achievable rate, i.e., the capacity, is given by
\begin{equation}
C=\sum_{s=1}^{N_S}\log_2\left(1+\frac{P_Tp_{s}^\star\lambda_{s}}{4\sigma^2}\right),\label{eq:Cstar}
\end{equation}
where $\lambda_s$ is the $s$th eigenvalue of $\mathbf{H}\mathbf{H}^H$ and $p_s^\star$ is the optimal power allocation for the $s$th symbol obtained through water-filling, as discussed in \cite{ner25-3}\footnote{Note that we refer to the maximum achievable rate in \eqref{eq:C} as the capacity with a slight abuse of terminology, since \eqref{eq:C} is strictly speaking the capacity only when $N_S=\min\{N_T,N_R\}$.}.

It has been shown in \cite{ner25-3} that the capacity in \eqref{eq:Cstar} can be achieved by lossless and reciprocal \glspl{milac}, having tunable admittance components in their multiport microwave networks interconnecting each port to each other.
However, the circuit complexity of such \glspl{milac}, defined as the number of tunable admittance components in their microwave networks, grows quadratically with the number of antennas, becoming prohibitive in gigantic \gls{mimo} systems.
To address this problem, we investigate whether there exist lower-complexity \gls{milac} architectures still able to achieve the capacity.

\section{Graph Theoretical Modeling of MiLAC}
\label{sec:graph}

In this section, we model \gls{milac} architectures as graphs, inspired by the graph theoretical modeling of \gls{bd-ris} introduced in \cite{ner24}.
This modeling shows that numerous \gls{milac} architectures are available to balance flexibility and circuit complexity.

A \gls{milac} has been introduced in \cite{ner25-1} as a multiport microwave network made of tunable admittance components interconnecting the ports to ground and to each other.
Considering a \gls{milac} with $N_V$ ports, we denote as $Y_{n,n}\in\mathbb{C}$ the admittance connecting port $n$ to ground, for $n=1,\ldots,N_V$, and as $Y_{m,n}\in\mathbb{C}$ the admittance interconnecting ports $m$ and $n$, $\forall m\neq n$.
As a function of these tunable admittance components, the admittance matrix of the \gls{milac} $\mathbf{Y}\in\mathbb{C}^{N_V\times N_V}$ writes as
\begin{equation}
\left[\mathbf{Y}\right]_{m,n}=
\begin{cases}
-Y_{m,n} & m\neq n\\
\sum_{k=1}^{N_V}Y_{k,n} & m=n
\end{cases},\label{eq:Yik}
\end{equation}
for $m,n=1,\ldots,N_V$ \cite{ner25-1}.
In a \gls{milac} where the ports are all interconnected to each other, i.e., including all admittance components $Y_{m,n}$, for $m,n=1,\ldots,N_V$, $\mathbf{Y}$ can be any arbitrary matrix following \eqref{eq:Yik}.
In the case of lossless and reciprocal admittance components, $\mathbf{Y}$ can be any arbitrary imaginary symmetric matrix, as considered in Section~\ref{sec:system}.
Nevertheless, besides this \gls{milac} architecture where the ports are all interconnected to each other, it is also possible to design \gls{milac} architectures including only a reduced number of interconnections between ports, and hence having lower circuit complexity.
If ports $m$ and $n$ are not interconnected, $Y_{m,n}$ is an open circuit, i.e., it is forced to zero.
In practice, two ports can be coupled even if disconnected due to mutual coupling effects, and therefore the distances between ports should be at least half a wavelength to mitigate these effects and a careful layout design should minimize unintended coupling as in standard microwave circuits.
As a consequence of $Y_{m,n}=0$, also the $(m,n)$th element of $\mathbf{Y}$ is forced to zero according to \eqref{eq:Yik}, limiting the flexibility of the resulting \gls{milac}.
This observation enables us to design numerous \gls{milac} architectures that strike a balance between flexibility and circuit complexity.

To rigorously characterize all possible \gls{milac} architectures, we represent the circuit topology of any $N_V$-port \gls{milac} through a graph, i.e., a set of ``dots'' (named vertices) interconnected by ``lines'' (named edges).
Formally, such graph is simple and undirected, meaning that the edges do not have a direction and that any two vertices are connected by at most one edge, and is defined by a pair of sets $\mathcal{G}=(\mathcal{V},\mathcal{E})$, where $\mathcal{V}$ and $\mathcal{E}$ are its vertex set and edge set, respectively \cite{bon76}.
The set $\mathcal{V}$ is constituted by the indexes of the ports of the \gls{milac}, i.e.,
\begin{equation}
\mathcal{V}=\left\{V_1,\ldots,V_{N_V}\right\},
\end{equation}
where $N_V$ denotes the number of \gls{milac} ports, which is $N_V=N_S+N_T$ or $N_V=N_R+N_S$ for a transmitter- or receiver-side \gls{milac}, respectively.
The set $\mathcal{E}$ is defined as
\begin{equation}
\mathcal{E}=\left\{\left(V_n,V_m\right)\mid V_n,V_m\in\mathcal{V},\;Y_{n,m}\neq0,\;n\neq m\right\},
\end{equation}
meaning that there is an edge between vertices $V_m$ and $V_n$ if and only if there is a tunable admittance component interconnecting ports $m$ and $n$.
In a lossless and reciprocal \gls{milac} with associated graph $\mathcal{G}=(\mathcal{V},\mathcal{E})$, the admittance matrix is therefore subject to $\mathbf{Y}=j\mathbf{B}$, with the susceptance matrix $\mathbf{B}$ fulfilling $\mathbf{B}=\mathbf{B}^T$ and $\mathbf{B}\in\mathcal{B}_{\mathcal{G}}$, where
\begin{multline}
\mathcal{B}_{\mathcal{G}}=\left\{\mathbf{B}\in\mathbb{R}^{N_V\times N_V}\mid\left[\mathbf{B}\right]_{n,m}=0,\;\forall n\neq m,\right.\\
\left(V_n,V_m\right)\notin\mathcal{E}\Bigl\},\label{eq:B-set}
\end{multline}
indicating that the $(m,n)$th element of $\mathbf{Y}$ is forced to zero if ports $m$ and $n$ are not interconnected by a tunable admittance.
Besides, the circuit complexity of such a \gls{milac}, defined as the number of tunable admittance components, is given by $N_C=N_V+N_E$, where $N_E$ denotes the cardinality of $\mathcal{E}$, since there are $N_V$ tunable admittance components connecting each port to ground and $N_E$ components interconnecting the ports.

The \gls{milac} architecture considered in previous works \cite{ner25-1,ner25-2,ner25-3}, that we denote as fully-connected \gls{milac}, has an associated graph $\mathcal{G}$ that is the complete graph on $N_V$ vertices.
Therefore, in this case we have $\mathcal{B}_{\mathcal{G}}=\mathbb{R}^{N_V\times N_V}$, yielding maximum flexibility, and $N_E=N_V(N_V-1)/2$ edges in $\mathcal{G}$, yielding a circuit complexity of
\begin{equation}
N_C^\text{Fully}=\frac{N_V\left(N_V+1\right)}{2}.\label{eq:complete-complexity}
\end{equation}
For a transmitter-side \gls{milac}, where $N_V=N_S+N_T$, we have a circuit complexity of
\begin{equation}
N_C^\text{Fully}=\frac{\left(N_S+N_T\right)\left(N_S+N_T+1\right)}{2},\label{eq:fully-tx-complexity}
\end{equation}
growing with $\mathcal{O}((N_S+N_T)^2)$, i.e., quadratically with the number of antennas $N_T$.
A similar discussion holds for a \gls{milac} at the receiver and, by replacing $N_T$ with $N_R$ in \eqref{eq:fully-tx-complexity}, we obtain that its complexity scales with $\mathcal{O}((N_S+N_R)^2)$, i.e., quadratically with $N_R$.
Although the fully-connected \gls{milac} offers the highest flexibility, corresponding to the largest set $\mathcal{B}_{\mathcal{G}}$, it also incurs the highest circuit complexity, which scales quadratically with the number of antennas.
This motivates the design of alternative \gls{milac} architectures with reduced circuit complexity that ideally do not significantly compromise the achievable rate despite their reduced flexibility.

\section{Capacity-Achieving MiLAC Architectures}
\label{sec:architectures}

We have introduced a graph theoretical model for \gls{milac} and showed that numerous \gls{milac} architectures are available to balance flexibility and circuit complexity.
In this section, we propose a family of \gls{milac} architectures proven to be capacity-achieving despite their highly reduced circuit complexity.

\subsection{Definition of Capacity-Achieving MiLAC Architectures}

We begin by recalling under which conditions we can achieve capacity in a \gls{milac}-aided system, and providing a formal definition of capacity-achieving \gls{milac} architecture.
According to \cite{ner25-3}, the capacity in \eqref{eq:Cstar} can be achieved by setting the scattering matrix of the transmitter-side \gls{milac} $\boldsymbol{\Theta}_F$ such that $[\boldsymbol{\Theta}_F]_{N_S+\mathcal{N}_T,\mathcal{N}_S}=\bar{\mathbf{V}}$, where $\bar{\mathbf{V}}\in\mathbb{C}^{N_T\times N_S}$ contains the first $N_S$ right singular vectors of $\mathbf{H}$, and the scattering matrix of the receiver-side \gls{milac} $\boldsymbol{\Theta}_G$ such that $[\boldsymbol{\Theta}_G]_{N_R+\mathcal{N}_S,\mathcal{N}_R}=\bar{\mathbf{U}}^H$, where $\bar{\mathbf{U}}\in\mathbb{C}^{N_R\times N_S}$ contains the first $N_S$ left singular vectors of $\mathbf{H}$.

Following these capacity-achieving conditions, a \gls{milac} at the transmitter is defined as capacity-achieving if its scattering matrix $\boldsymbol{\Theta}_F$ can be reconfigured such that $[\boldsymbol{\Theta}_F]_{N_S+\mathcal{N}_T,\mathcal{N}_S}=\bar{\mathbf{V}}$, for any given matrix $\bar{\mathbf{V}}\in\mathbb{C}^{N_T\times N_S}$ having orthogonal columns with unit $\ell_2$-norm.
This means that a transmitter-side \gls{milac} is capacity-achieving if the first $N_S$ columns of its scattering matrix can be reconfigured to $[\mathbf{0}_{N_S},\bar{\mathbf{V}}^T]^T$, for any possible $\bar{\mathbf{V}}$.
Formally, $\boldsymbol{\Theta}_F$ must satisfy
\begin{equation}
\boldsymbol{\Theta}_F
\begin{bmatrix}
\mathbf{I}_{N_S}\\
\mathbf{0}_{N_T\times N_S}
\end{bmatrix}
=
\begin{bmatrix}
\mathbf{0}_{N_S}\\
\bar{\mathbf{V}}
\end{bmatrix},\label{eq:tx-cond1}
\end{equation}
for any possible $\bar{\mathbf{V}}$, where the product $\boldsymbol{\Theta}_F[\mathbf{I}_{N_S},\mathbf{0}_{N_S\times N_T}]^T$ returns the first $N_S$ columns of $\boldsymbol{\Theta}_F$.
We have shown in \cite{ner25-3} that a fully-connected \gls{milac} (lossless and reciprocal) is capacity-achieving.
However, when considering different \gls{milac} architectures with a reduced number of admittance components, their susceptance matrix $\mathbf{B}_F$ must be such that $\mathbf{B}_F\in\mathcal{B}_{\mathcal{G}}$ and cannot be an arbitrary symmetric matrix.
As a consequence, the scattering matrix $\boldsymbol{\Theta}_F$, which is a function of $\mathbf{Y}_F=j\mathbf{B}_F$ as in \eqref{eq:TF(YF)}, cannot be an arbitrary unitary symmetric matrix, and condition \eqref{eq:tx-cond1} may not be achievable for any $\bar{\mathbf{V}}$.

Similarly, a \gls{milac} at the receiver is defined as capacity-achieving if its scattering matrix $\boldsymbol{\Theta}_G$ can be reconfigured such that $[\boldsymbol{\Theta}_G]_{N_R+\mathcal{N}_S,\mathcal{N}_R}=\bar{\mathbf{U}}^H$, for any $\bar{\mathbf{U}}\in\mathbb{C}^{N_R\times N_S}$ having orthogonal columns with unit $\ell_2$-norm.
In other words, a receiver-side \gls{milac} is capacity-achieving if the last $N_S$ rows of its scattering matrix can be reconfigured to $[\bar{\mathbf{U}}^H, \mathbf{0}_{N_S}]$, for any possible $\bar{\mathbf{U}}$, or, equivalently, the last $N_S$ columns of its scattering matrix can be reconfigured to $[\bar{\mathbf{U}}^H, \mathbf{0}_{N_S}]^T$, given the symmetry of $\boldsymbol{\Theta}_G$.
More precisely, $\boldsymbol{\Theta}_G$ must satisfy
\begin{equation}
\boldsymbol{\Theta}_G
\begin{bmatrix}
\mathbf{0}_{N_T\times N_S}\\
\mathbf{I}_{N_S}
\end{bmatrix}
=
\begin{bmatrix}
\bar{\mathbf{U}}^*\\
\mathbf{0}_{N_S}
\end{bmatrix},\label{eq:rx-cond1}
\end{equation}
for any possible $\bar{\mathbf{U}}$, where the product $\boldsymbol{\Theta}_G[\mathbf{0}_{N_S\times N_T},\mathbf{I}_{N_S}]^T$ returns the last $N_S$ columns of $\boldsymbol{\Theta}_G$.
As discussed for the transmitter-side \gls{milac}, a fully-connected \gls{milac} (lossless and reciprocal) is capacity-achieving \cite{ner25-3}.
However, other \gls{milac} architectures with a reduced number of admittance components, and hence reduced flexibility, could not be capacity-achieving.

\subsection{A Family of Capacity-Achieving MiLAC Architectures}

\begin{figure}[t]
\centering
\includegraphics[width=0.42\textwidth]{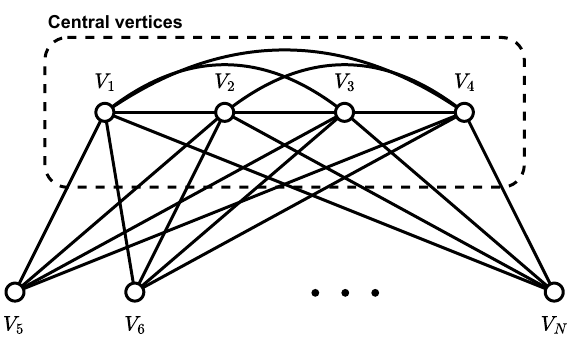}
\caption{Center graph with center size $Q=4$.}
\label{fig:center}
\end{figure}

As the circuit complexity of a fully-connected \gls{milac} becomes prohibitive when the number of antennas grows high, we are interested in alternative capacity-achieving \gls{milac} architectures with a reduced circuit complexity.
Before providing a sufficient condition for a \gls{milac} architecture to be capacity-achieving, we introduce the concept of center graph.
\begin{definition}
A center graph with center size $Q$ is a graph that contains $Q$ vertices, denoted as central vertices, each connected to all other vertices, while the remaining vertices are only connected to the central vertices.
\end{definition}
As an illustrative example, we report in Fig.~\ref{fig:center} a center graph on $N$ vertices with center size $Q=4$.
The central vertices are $V_1$, $V_2$, $V_3$, and $V_4$, which are connected to all vertices, while the other $N-4$ vertices are connected only to the central vertices.
We denote a \gls{milac} with associated graph being a center graph as a stem-connected \gls{milac}, following the terminology introduced in \gls{bd-ris} literature \cite{zho24,wu25}.

Based on this definition of center graph, the following proposition gives a sufficient condition for a transmitter-side \gls{milac} architecture to satisfy \eqref{eq:tx-cond1} for any $\bar{\mathbf{V}}$, i.e., for being capacity-achieving.
\begin{proposition}
A transmitter-side \gls{milac} with associated graph $\mathcal{G}$ is capacity-achieving if $\mathcal{G}$ is a center graph with center size $Q=2N_S-1$ and its central vertices include the $N_S$ vertices corresponding to the input ports of the \gls{milac}.
\label{pro:tx}
\end{proposition}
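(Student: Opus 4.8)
The plan is to convert the capacity-achieving condition \eqref{eq:tx-cond1} into an algebraic feasibility problem for the sparse susceptance matrix of the stem-connected \gls{milac}, and then to prove this problem is always solvable. Since \eqref{eq:tx-cond1} constrains only the first $N_S$ columns of $\boldsymbol{\Theta}_F$, and $\boldsymbol{\Theta}_F$ is the Cayley transform \eqref{eq:TF(YF)} of $\mathbf{Y}_F=j\mathbf{B}_F$, I would first rewrite \eqref{eq:tx-cond1} as a linear equation in $\mathbf{B}_F$. Abbreviating $\mathbf{e}=[\mathbf{I}_{N_S},\mathbf{0}_{N_S\times N_T}]^T$ and $\mathbf{t}=[\mathbf{0}_{N_S},\bar{\mathbf{V}}^T]^T$, the identity $(\mathbf{I}-\mathbf{Y}_F/Y_0)\mathbf{e}=(\mathbf{I}+\mathbf{Y}_F/Y_0)\mathbf{t}$ collapses to
\begin{equation}
\mathbf{B}_F\begin{bmatrix}\mathbf{I}_{N_S}\\\bar{\mathbf{V}}\end{bmatrix}=jY_0\begin{bmatrix}-\mathbf{I}_{N_S}\\\bar{\mathbf{V}}\end{bmatrix}.
\end{equation}
Separating real and imaginary parts, and using that $\mathbf{B}_F$ is real, yields a single real linear system $\mathbf{B}_F\mathbf{X}=\mathbf{Z}$ with $\mathbf{X},\mathbf{Z}\in\mathbb{R}^{N_V\times 2N_S}$ built from $\Re\{\bar{\mathbf{V}}\}$ and $\Im\{\bar{\mathbf{V}}\}$. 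The symmetry of $\boldsymbol{\Theta}_F$ becomes the requirement that $\mathbf{B}_F$ be symmetric, and the orthonormality $\bar{\mathbf{V}}^H\bar{\mathbf{V}}=\mathbf{I}_{N_S}$ secures the consistency condition $\mathbf{X}^T\mathbf{Z}=\mathbf{Z}^T\mathbf{X}$ that any symmetric solution must satisfy.

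Next I would impose the center-graph sparsity. Partitioning the ports into the $Q=2N_S-1$ central vertices $\mathcal{C}$ (which by hypothesis contain the $N_S$ input ports) and the $N_V-Q=N_T-N_S+1$ leaf vertices $\mathcal{L}$, membership $\mathbf{B}_F\in\mathcal{B}_{\mathcal{G}}$ forces the leaf--leaf block $[\mathbf{B}_F]_{\mathcal{L},\mathcal{L}}$ to be diagonal, while the central block $[\mathbf{B}_F]_{\mathcal{C},\mathcal{C}}$ and the coupling block $[\mathbf{B}_F]_{\mathcal{C},\mathcal{L}}$ stay free. Splitting $\mathbf{B}_F\mathbf{X}=\mathbf{Z}$ into its center rows and leaf rows decouples the problem: each leaf row depends only on that leaf's coupling vector to the center and on its single diagonal self-term, so the leaves can be solved one at a time.

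The decisive step, and the reason for $Q=2N_S-1$, is a degree-of-freedom count on one leaf. For leaf $\ell$ the equation is $\mathbf{m}_\ell^T[\mathbf{X}]_{\mathcal{C},:}+d_\ell[\mathbf{X}]_{\ell,:}=[\mathbf{Z}]_{\ell,:}$, i.e.\ the unknown $[\mathbf{m}_\ell^T,d_\ell]$ must satisfy a square $2N_S\times 2N_S$ system whose coefficient matrix stacks the $Q=2N_S-1$ rows of $[\mathbf{X}]_{\mathcal{C},:}$ on top of the single row $[\mathbf{X}]_{\ell,:}$. Thus each leaf carries exactly $Q+1=2N_S$ tunable real parameters ($Q$ couplings to the center plus $1$ self-term) to meet the $2N_S$ real equations from the real and imaginary parts of its complex target, so the leaf subproblem is exactly determined. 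I would then prove nonsingularity of this stacked matrix: because the input ports are forced into $\mathcal{C}$, the rows of $[\mathbf{X}]_{\mathcal{C},:}$ contain the block $[\mathbf{I}_{N_S},\mathbf{0}_{N_S}]$, and a cofactor reduction against this block shows nonsingularity is equivalent to linear independence of the imaginary parts of the $N_S$ relevant rows of $\bar{\mathbf{V}}$. Once all leaf couplings and self-terms are fixed, the residual center-row equations are absorbed by the free symmetric block $[\mathbf{B}_F]_{\mathcal{C},\mathcal{C}}$, whose solvability follows from the consistency $\mathbf{X}^T\mathbf{Z}=\mathbf{Z}^T\mathbf{X}$ above.

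The main obstacle I anticipate is establishing solvability for \emph{every} admissible $\bar{\mathbf{V}}$ rather than merely for generic ones. The per-leaf coefficient matrices degenerate for special $\bar{\mathbf{V}}$ (for instance a real $\bar{\mathbf{V}}$, for which the target $\boldsymbol{\Theta}_F$ has an eigenvalue at $-1$ and the finite-susceptance representation $\mathbf{Y}_F=j\mathbf{B}_F$ breaks down). Handling these boundary cases is where the argument must be most careful: either by working directly in the scattering domain, where such targets remain realizable as limiting (ideal) component values, or by a continuity argument showing \eqref{eq:tx-cond1} is met on the closure while the complexity count $N_C=N_V+N_E$ is unchanged. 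The clean identity $Q+1=2N_S$ per leaf is what makes the construction tight, and confirming it survives these degeneracies is the crux of the proof.
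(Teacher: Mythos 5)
Your proposal follows essentially the same route as the paper: Cayley-transforming \eqref{eq:tx-cond1} into a real linear system for the symmetric, sparsity-constrained $\mathbf{B}_F$, decoupling each leaf port into an exactly determined subproblem with $Q+1=2N_S$ unknowns (whose $2N_S\times 2N_S$ coefficient matrix reduces, by cofactor expansion against the $[\mathbf{I}_{N_S},\mathbf{0}_{N_S}]$ block contributed by the input ports, to the very same $N_S\times N_S$ matrices $\left[\mathbf{J}_1,\mathbf{j}_{N_S-1+i}\right]$ that the paper inverts), and then filling the free symmetric central block via the symmetric-solution lemma for $\mathbf{A}\mathbf{X}=\mathbf{C}$ with consistency supplied by $\bar{\mathbf{V}}^H\bar{\mathbf{V}}=\mathbf{I}$. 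The paper merely organizes the leaf solve in two triangular stages ($\mathbf{B}_{22}$ via \eqref{eq:tx-case2}, then $\mathbf{B}_{12}$ by back-substitution from \eqref{eq:tx-case4}) and, like you, rests on the generic full-column-rank assumption on $\mathbf{J}_1$ (invoked ``with probability 1''), so the degenerate-$\bar{\mathbf{V}}$ caveat you flag applies equally to its argument.
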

\begin{proof}
This proposition is proven in Section~\ref{sec:tx}, where we optimize \gls{milac} architectures satisfying this sufficient condition with a closed-form solution that fulfills \eqref{eq:tx-cond1} for any $\bar{\mathbf{V}}$.
\end{proof}
Fig.~\ref{fig:center-tx}(a) shows the graph corresponding to a stem-connected \gls{milac} at the transmitter, having center size $Q=2N_S-1$.
Note that the central vertices include all the $N_S$ vertices associated with the input ports (in red), i.e., $V_1,\ldots,V_{N_S}$.
In addition, the central vertices also include $N_S-1$ vertices associated with any $N_S-1$ output ports (in green), which are $V_{N_S+1},\ldots,V_{2N_S-1}$ in Fig.~\ref{fig:center-tx}(a).
Since in this graph the first $2N_S-1$ vertices are connected to all other vertices, the susceptance matrix of the corresponding \gls{milac} is constrained as $\mathbf{B}_F\in\mathcal{B}_F$, where
\begin{multline}
\mathcal{B}_F=\left\{\mathbf{B}\in\mathbb{R}^{(N_S+N_T)\times(N_S+N_T)}\mid\left[\mathbf{B}\right]_{n,m}=0,\;\forall n\neq m,\right.\\
n>2N_S-1,\;m>2N_S-1\Bigl\},\label{eq:B-set-tx}
\end{multline}
meaning that only the elements in the first $2N_S-1$ rows and columns and in the main diagonal are tunable.
A representation of such a matrix $\mathbf{B}_F$ is reported in Fig.~\ref{fig:center-tx}(b), where the tunable elements are highlighted in gray.
Note that, although in Fig.~\ref{fig:center-tx}(a) we have selected the output ports $V_{N_S+1},\ldots,V_{2N_S-1}$ to be connected to all the others, any set of $N_S-1$ output ports could be chosen.
Accordingly, the corresponding $N_S-1$ rows and columns of the matrix $\mathbf{B}_F$ in Fig.~\ref{fig:center-tx}(b) will be tunable.
This implies that there are multiple stem-connected \gls{milac} architectures that are capacity achieving, and that the number of such architectures is given by the binomial coefficient $\binom{N_T}{N_S-1}$.
If $N_S=1$, only one architecture is optimal (having a single central vertex corresponding to the only input port of the \gls{milac}) since $\binom{N_T}{0}=1$.

\begin{figure}[t]
\centering
\subfigure[Center graph with center size $Q=2N_S-1$.]{
\includegraphics[width=0.44\textwidth]{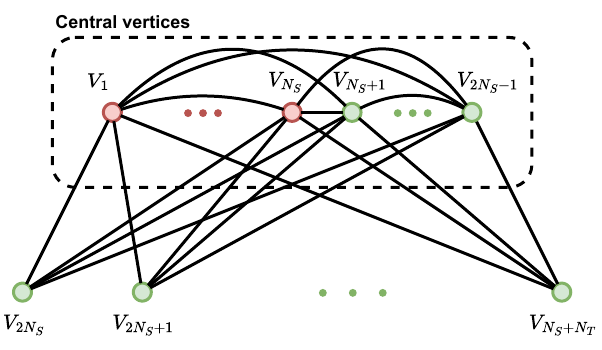}}
\subfigure[Susceptance matrix $\mathbf{B}_F$.]{
\includegraphics[width=0.22\textwidth]{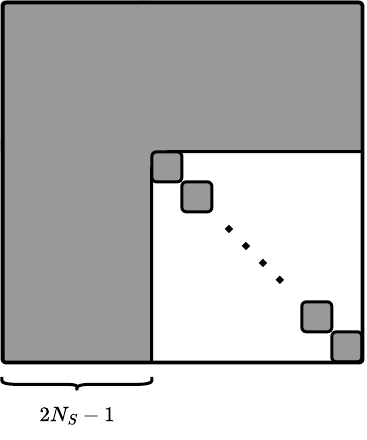}}
\caption{(a) Graph and (b) susceptance matrix (with tunable elements in gray) of a capacity-achieving stem-connected MiLAC at the transmitter.}
\label{fig:center-tx}
\end{figure}

\begin{figure}[t]
\centering
\subfigure[Center graph with center size $Q=2N_S-1$.]{
\includegraphics[width=0.44\textwidth]{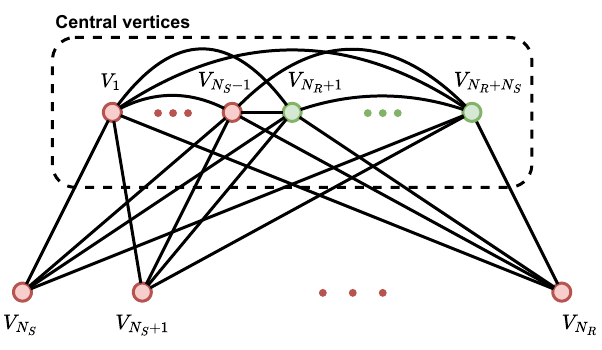}}
\subfigure[Susceptance matrix $\mathbf{B}_G$.]{
\includegraphics[width=0.22\textwidth]{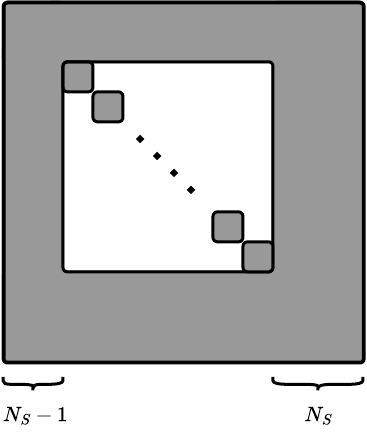}}
\caption{(a) Graph and (b) susceptance matrix (with tunable elements in gray) of a capacity-achieving stem-connected MiLAC at the receiver.}
\label{fig:center-rx}
\end{figure}

\begin{table*}[t]
\centering
\caption{Fully- and stem-connected MiLAC architectures.}
\label{tab:milac}
\begin{tabular}{@{}llll@{}}
\toprule
MiLAC architecture & Associated graph & Circuit complexity$^\star$ & Description$^\star$\\
\midrule
Fully-connected MiLAC & Complete graph & $N_C^\text{Fully}=\left(N_S+N_T\right)\left(N_S+N_T+1\right)/2$ &
\begin{tabular}{@{}l@{}}All $N_S+N_T$ MiLAC ports are\\connected to each other\end{tabular}\\
\midrule
Stem-connected MiLAC & Center graph & $N_C^\text{Stem}=N_S\left(2N_T+1\right)$ &
\begin{tabular}{@{}l@{}}Only $2N_S-1$ MiLAC ports are\\connected to all the others\end{tabular}\\
\bottomrule
\end{tabular}
\newline\newline
\footnotesize $^\star$ For the stem-connected MiLAC, we consider $Q=2N_S-1$. The same discussion applies to a receiver-side MiLAC by replacing $N_T$ with $N_R$.
\end{table*}

As discussed for a transmitter-side \gls{milac}, the following proposition gives a sufficient condition for a receiver-side \gls{milac} architecture to satisfy \eqref{eq:rx-cond1} for any $\bar{\mathbf{U}}$, i.e., for being capacity-achieving.
\begin{proposition}
A receiver-side \gls{milac} with associated graph $\mathcal{G}$ is capacity-achieving if $\mathcal{G}$ is a center graph with center size $Q=2N_S-1$ and its central vertices include the $N_S$ vertices corresponding to the output ports of the \gls{milac}.
\label{pro:rx}
\end{proposition}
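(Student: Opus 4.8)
The plan is to reduce this receiver-side statement to the already-established transmitter-side result, Proposition~\ref{pro:tx}, by exploiting the structural symmetry between conditions \eqref{eq:tx-cond1} and \eqref{eq:rx-cond1}. The only differences between them are that the $N_S$ distinguished ports occupy the last block of indices rather than the first, that the target singular-vector matrix appears in the top block rather than the bottom, and that it is conjugated. All three differences will be absorbed by a single relabeling of the MiLAC ports, together with the elementary observation that $\bar{\mathbf{U}}^*$ is itself an admissible target whenever $\bar{\mathbf{U}}$ is.

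Concretely, I would introduce the permutation $\mathbf{P}$ that swaps the block of $N_R$ input-port indices $\{1,\ldots,N_R\}$ with the block of $N_S$ output-port indices $\{N_R+1,\ldots,N_R+N_S\}$, so that the reordered ports are listed as output ports followed by input ports. First I would record the permutation-equivariance of the scattering representation: since $\mathbf{P}$ is orthogonal, congruence of the admittance matrix $\tilde{\mathbf{Y}}=\mathbf{P}^T\mathbf{Y}_G\mathbf{P}$ turns \eqref{eq:TG(YG)} into $\tilde{\boldsymbol{\Theta}}=\mathbf{P}^T\boldsymbol{\Theta}_G\mathbf{P}$, because $Y_0\mathbf{I}\pm\tilde{\mathbf{Y}}=\mathbf{P}^T(Y_0\mathbf{I}\pm\mathbf{Y}_G)\mathbf{P}$ and the factors $\mathbf{P}\mathbf{P}^T$ cancel. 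Moreover $\tilde{\mathbf{Y}}=j\mathbf{P}^T\mathbf{B}_G\mathbf{P}$ remains imaginary and symmetric, and $\mathbf{P}^T\mathbf{B}_G\mathbf{P}$ lies in the susceptance set of the relabeled graph $\tilde{\mathcal{G}}$. Since the center-graph property is invariant under vertex relabeling, $\tilde{\mathcal{G}}$ is still a center graph with center size $2N_S-1$, but its central vertices now comprise the (relabeled) $N_S$ output ports, sitting in the first $N_S$ positions, together with $N_S-1$ of the remaining ports. Hence $\tilde{\mathcal{G}}$ satisfies exactly the hypothesis of Proposition~\ref{pro:tx} with $N_T$ replaced by $N_R$.

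Next I would rewrite \eqref{eq:rx-cond1} in the permuted coordinates. Its selection matrix picks the $N_S$ columns of $\boldsymbol{\Theta}_G$ associated with the output ports, which become the first $N_S$ columns of $\tilde{\boldsymbol{\Theta}}$, while the target has its nonzero block $\bar{\mathbf{U}}^*$ in the input-port rows, which become the last $N_R$ rows. The condition therefore reads
\[
\tilde{\boldsymbol{\Theta}}
\begin{bmatrix}\mathbf{I}_{N_S}\\ \mathbf{0}_{N_R\times N_S}\end{bmatrix}
=\begin{bmatrix}\mathbf{0}_{N_S}\\ \bar{\mathbf{U}}^*\end{bmatrix},
\]
which is precisely \eqref{eq:tx-cond1} with $\bar{\mathbf{V}}=\bar{\mathbf{U}}^*$ and $N_T=N_R$. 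Because $\bar{\mathbf{U}}$ has orthogonal columns of unit $\ell_2$-norm, so does its conjugate $\bar{\mathbf{U}}^*$, making $\bar{\mathbf{U}}^*$ an admissible $\bar{\mathbf{V}}$. Invoking Proposition~\ref{pro:tx} then yields a feasible symmetric susceptance matrix $\tilde{\mathbf{B}}\in\mathcal{B}_{\tilde{\mathcal{G}}}$ realizing this condition, and conjugating back gives $\mathbf{B}_G=\mathbf{P}\tilde{\mathbf{B}}\mathbf{P}^T$, a valid receiver-side MiLAC in the prescribed center-graph family that fulfills \eqref{eq:rx-cond1} for the arbitrary $\bar{\mathbf{U}}$.

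The step I expect to require the most care is the equivariance bookkeeping: verifying that the block-swap $\mathbf{P}$ maps the receiver center-graph constraint set exactly onto the transmitter one, including the freedom to place the extra $N_S-1$ central vertices, and that it sends \eqref{eq:rx-cond1} to \eqref{eq:tx-cond1} with the correct row and column blocks and the conjugated target, rather than to some transposed or mismatched variant. Once this relabeling is pinned down, the claim follows immediately from Proposition~\ref{pro:tx}. Alternatively, one could bypass the reduction and mirror the explicit closed-form construction of Section~\ref{sec:tx} directly on the receiver ports, which is the route I expect the paper to take in Section~\ref{sec:rx}.
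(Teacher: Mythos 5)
Your reduction is correct, but it is not the route the paper takes: Section~\ref{sec:rx} does not invoke Proposition~\ref{pro:tx} at all, and instead repeats the explicit construction of Section~\ref{sec:tx} in the receiver's native port ordering --- reformulating \eqref{eq:rx-cond1} into the feasibility problem \eqref{eq:rx-prob}--\eqref{eq:rx-case4}, placing the sparse (diagonal-block) constraint in $\mathbf{B}_{11}$ rather than $\mathbf{B}_{22}$, and solving block by block via the same Corollary~\ref{cor:symm-eq}, yielding Alg.~\ref{alg:rx}. Your argument is sound on every point I checked: the M\"obius-type map \eqref{eq:TG(YG)} is indeed equivariant under orthogonal congruence by a permutation, so $\tilde{\boldsymbol{\Theta}}=\mathbf{P}^T\boldsymbol{\Theta}_G\mathbf{P}$; the block swap sends the output-port columns of $\boldsymbol{\Theta}_G$ to the first $N_S$ columns of $\tilde{\boldsymbol{\Theta}}$ and the input-port rows to the last $N_R$ rows, so \eqref{eq:rx-cond1} becomes exactly \eqref{eq:tx-cond1} with $\bar{\mathbf{V}}=\bar{\mathbf{U}}^*$ (an admissible target, since conjugation preserves orthonormal columns); and the relabeled graph satisfies the hypothesis of Proposition~\ref{pro:tx} because the paper explicitly allows the extra $N_S-1$ central vertices to be any subset of the non-input ports. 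The one thing worth stating explicitly if you write this up is that Proposition~\ref{pro:tx} is a purely algebraic existence statement about unitary symmetric matrices with a prescribed susceptance sparsity pattern, independent of the physical role of the ports, which is what licenses applying it to the relabeled receiver. What each approach buys: yours is shorter, makes the transmitter/receiver symmetry structurally transparent, and immediately yields a construction (run Alg.~\ref{alg:tx} on $\bar{\mathbf{U}}^*$ and conjugate by $\mathbf{P}$ --- one can check this reproduces Alg.~\ref{alg:rx} up to the sign flip $\mathbf{J}\to-\mathbf{J}$ induced by the conjugation of $\bar{\mathbf{U}}$); the paper's rederivation costs more space but delivers the closed-form entries of $\mathbf{B}_G$ directly in the receiver's own indexing, which is what an implementation needs and what Figs.~\ref{fig:center-rx} and \ref{fig:B-rx} depict.
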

\begin{proof}
Similar to Proof of Proposition~\ref{pro:tx}, we prove this proposition in Section~\ref{sec:rx} by optimizing these \gls{milac} architectures with a closed-form solution that achieves capacity.
\end{proof}
Fig.~\ref{fig:center-rx}(a) shows the graph corresponding to a stem-connected \gls{milac} at the receiver, with center size $Q=2N_S-1$.
The central vertices include all the $N_S$ vertices associated with the output ports (in green), i.e., $V_{N_R+1},\ldots,V_{N_R+N_S}$ .
In addition, the central vertices also include $N_S-1$ vertices associated with any $N_S-1$ input ports (in red), namely $V_{1},\ldots,V_{N_S-1}$ in Fig.~\ref{fig:center-rx}(a).
Since in this graph the first $N_S-1$ and last $N_S$ vertices are connected to all other vertices, the susceptance matrix of the corresponding \gls{milac} is constrained as $\mathbf{B}_G\in\mathcal{B}_G$, where
\begin{multline}
\mathcal{B}_G=\left\{\mathbf{B}\in\mathbb{R}^{(N_R+N_S)\times (N_R+N_S)}\mid\left[\mathbf{B}\right]_{n,m}=0,\;\forall n\neq m,\right.\\
N_S-1<n\leq N_R,\;N_S-1<m\leq N_R\Bigl\},\label{eq:B-set-rx}
\end{multline}
meaning that only the elements in the first $N_S-1$ and last $N_S$ rows and columns and in the main diagonal are tunable.
A representation of such a matrix $\mathbf{B}_G$ is reported in Fig.~\ref{fig:center-rx}(b), where the tunable elements are highlighted in gray.
Note that, although in Fig.~\ref{fig:center-rx}(a) we have selected the input ports $V_{1},\ldots,V_{N_S-1}$ to be connected to all the others, any set of $N_S-1$ input ports could be chosen\footnote{We have selected the input ports $V_{1},\ldots,V_{N_S-1}$ because the optimization of such a receiver-side \gls{milac} is analogous to that of the transmitter-side \gls{milac} shown in Fig.~\ref{fig:center-tx}, as it will be clarified in Section~\ref{sec:rx}.}.
Accordingly, the corresponding $N_S-1$ rows and columns of the matrix $\mathbf{B}_G$ in Fig.~\ref{fig:center-rx}(b) will be tunable.
This means that there are multiple stem-connected \gls{milac} architectures that are capacity achieving, specifically $\binom{N_R}{N_S-1}$.
If $N_S=1$, only one architecture is optimal as $\binom{N_R}{0}=1$.

A stem-connected \gls{milac} can achieve capacity with a significantly reduced circuit complexity compared to a fully-connected \gls{milac}.
To quantify the circuit complexity of a stem-connected \gls{milac}, we recall that its complexity is given by $N_C=N_V+N_E$, where $N_V$ and $N_E$ are the number of vertices and edges of its associated graph, which is a center graph with center size $Q$.
Since the number of edges in a center graph on $N_V$ vertices and having center size $Q$ is $N_E=Q(Q-1)/2+Q(N_V-Q)$, the complexity of a stem-connected \gls{milac} is given by
\begin{align}
N_C^\text{Stem}
&=N_V+\frac{Q\left(Q-1\right)}{2}+Q\left(N_V-Q\right)\\
&=\frac{\left(Q+1\right)\left(2N_V-Q\right)}{2}.\label{eq:center-complexity}
\end{align}
Thus, in the case of a capacity-achieving stem-connected \gls{milac} at the transmitter, we can substitute $N_V=N_S+N_T$ and $Q=2N_S-1$ into \eqref{eq:center-complexity} and obtain
\begin{equation}
N_C^\text{Stem}=N_S\left(2N_T+1\right),\label{eq:stem-tx-complexity}
\end{equation}
growing with $\mathcal{O}(N_SN_T)$, i.e., linearly with the number of antennas $N_T$.
A similar discussion holds for a receiver-side \gls{milac} and, by replacing $N_T$ in \eqref{eq:stem-tx-complexity} with $N_R$, we obtain that its circuit complexity scales with $\mathcal{O}(N_SN_R)$, i.e., linearly with $N_R$.
We summarize the main properties of fully- and stem-connected \glspl{milac} in Tab.~\ref{tab:milac}.

\begin{remark}
The capacity-achieving stem-connected \gls{milac} architectures introduced in Propositions~\ref{pro:tx} and \ref{pro:rx} have a center size $Q=2N_S-1$.
Note that such a value of the center size is aligned with the center size of optimal stem-connected \gls{ris} architectures proposed in \cite{zho24} and analyzed in \cite{wu25}.
In detail, a sufficient condition for stem-connected \glspl{ris} to achieve maximum performance is to have a center size $Q=2K-1$, with $K$ denoting the degrees of freedom, or multiplexing gain, of the \gls{ris}-aided \gls{mimo} channel, as observed in \cite{zho24} and proven in \cite[Corollary~2]{wu25}.
Similarly, the capacity-achieving stem-connected \glspl{milac} in Propositions~\ref{pro:tx} and \ref{pro:rx} have a center size $Q=2N_S-1$, as the number of streams $N_S$ is the multiplexing gain of the \gls{milac}-aided \gls{mimo} channel.
The value $Q=2N_S-1$ is sufficient to achieve capacity, but not proved to be necessary, as proving necessity would require showing that no other center graph with fewer than $2N_S-1$ central vertices can achieve capacity.
\end{remark}

\begin{remark}
The capacity-achieving stem-connected \gls{milac} architectures in Proposition~\ref{pro:tx} have associated graphs including the $N_S$ vertices corresponding to the input \gls{milac} ports within their central vertices.
Likewise, the \gls{milac} architectures in Proposition~\ref{pro:rx} have associated graphs including the $N_S$ vertices corresponding to the output \gls{milac} ports within their central vertices.
Note that this constraint is imposed since it is required by our capacity-achieving closed-form solutions, as it will be clarified in Sections~\ref{sec:tx} and \ref{sec:rx}.
Thus, this constraint is not proven to be necessary to achieve capacity.
\end{remark}

\section{Optimization of Capacity-Achieving Transmitter-Side MiLAC Architectures}
\label{sec:tx}

We have introduced stem-connected \gls{milac} architectures and claimed in Proposition~\ref{pro:tx} that they are capacity-achieving.
In this section, we prove Proposition~\ref{pro:tx} by optimizing a stem-connected \gls{milac} at the transmitter through a closed-form that achieves capacity.
To simplify the notation, only in this section, we denote the scattering, admittance, and susceptance matrices of the considered \gls{milac} at the transmitter as $\boldsymbol{\Theta}$, $\mathbf{Y}$, and $\mathbf{B}$ instead of $\boldsymbol{\Theta}_F$, $\mathbf{Y}_F$, and $\mathbf{B}_F$.

\subsection{Reformulation of the Capacity-Achieving Condition}

By recalling that $\boldsymbol{\Theta}$ can be expressed as a function of the admittance matrix $\mathbf{Y}=j\mathbf{B}$ as in \eqref{eq:TF(YF)}, the capacity-achieving condition in \eqref{eq:tx-cond1} can be reformulated as
\begin{equation}
\left(Y_0\mathbf{I}-j\mathbf{B}\right)
\begin{bmatrix}
\mathbf{I}_{N_S}\\
\mathbf{0}_{N_T\times N_S}
\end{bmatrix}
=
\left(Y_0\mathbf{I}+j\mathbf{B}\right)
\begin{bmatrix}
\mathbf{0}_{N_S}\\
\bar{\mathbf{V}}
\end{bmatrix},\label{eq:tx-cond2}
\end{equation}
as a function of $\mathbf{B}$ rather than $\boldsymbol{\Theta}$.
After algebraic computations, from \eqref{eq:tx-cond2} we obtain
\begin{equation}
j\mathbf{B}
\begin{bmatrix}
\mathbf{I}_{N_S}\\
\bar{\mathbf{V}}
\end{bmatrix}
=Y_0
\begin{bmatrix}
\mathbf{I}_{N_S}\\
-\bar{\mathbf{V}}
\end{bmatrix},
\end{equation}
or, equivalently,
\begin{equation}
j
\begin{bmatrix}
\mathbf{I}_{N_S} & \bar{\mathbf{V}}^T
\end{bmatrix}
\mathbf{B}
=Y_0
\begin{bmatrix}
\mathbf{I}_{N_S} & -\bar{\mathbf{V}}^T
\end{bmatrix},\label{eq:tx-cond3}
\end{equation}
since $\mathbf{B}$ is symmetric.
The condition in \eqref{eq:tx-cond3} is a linear matrix equation with real unknowns, i.e., the entries of the real matrix $\mathbf{B}$, and complex coefficients, i.e., the entries of the complex matrices $j[\mathbf{I}_{N_S},\bar{\mathbf{V}}^T]$ and $Y_0[\mathbf{I}_{N_S},-\bar{\mathbf{V}}^T]$.
To transform \eqref{eq:tx-cond3} into a linear matrix equation with real coefficients, we equivalently rewrite it as
\begin{equation}
\begin{bmatrix}
\Re\left\{j
\begin{bmatrix}
\mathbf{I}_{N_S} & \bar{\mathbf{V}}^T
\end{bmatrix}\right\}\\
\Im\left\{j
\begin{bmatrix}
\mathbf{I}_{N_S} & \bar{\mathbf{V}}^T
\end{bmatrix}\right\}
\end{bmatrix}
\mathbf{B}=
\begin{bmatrix}
\Re\left\{Y_0
\begin{bmatrix}
\mathbf{I}_{N_S} & -\bar{\mathbf{V}}^T
\end{bmatrix}\right\}\\
\Im\left\{Y_0
\begin{bmatrix}
\mathbf{I}_{N_S} & -\bar{\mathbf{V}}^T
\end{bmatrix}\right\}
\end{bmatrix},
\end{equation}
which can be expressed more concisely as
\begin{equation}
\begin{bmatrix}
\mathbf{0}_{N_S} & -\Im\left\{\bar{\mathbf{V}}\right\}^T\\
\mathbf{I}_{N_S} & \Re\left\{\bar{\mathbf{V}}\right\}^T
\end{bmatrix}
\mathbf{B}=Y_0
\begin{bmatrix}
\mathbf{I}_{N_S} & -\Re\left\{\bar{\mathbf{V}}\right\}^T\\
\mathbf{0}_{N_S} & -\Im\left\{\bar{\mathbf{V}}\right\}^T
\end{bmatrix}.\label{eq:tx-cond4}
\end{equation}
To find a matrix $\mathbf{B}$ satisfying \eqref{eq:tx-cond4}, we partition $\mathbf{B}$ as 
\begin{equation}
\mathbf{B}=
\begin{bmatrix}
\mathbf{B}_{11} & \mathbf{B}_{12}\\
\mathbf{B}_{21} & \mathbf{B}_{22}
\end{bmatrix},\label{eq:B-part}
\end{equation}
where $\mathbf{B}_{11}\in\mathbb{R}^{N_S\times N_S}$, $\mathbf{B}_{12}\in\mathbb{R}^{N_S\times N_T}$, $\mathbf{B}_{21}\in\mathbb{R}^{N_T\times N_S}$, and $\mathbf{B}_{22}\in\mathbb{R}^{N_T\times N_T}$.
Thus, by introducing $\mathbf{R}\in\mathbb{R}^{N_S\times N_T}$ and $\mathbf{J}\in\mathbb{R}^{N_S\times N_T}$ as $\mathbf{R}=\Re\left\{\bar{\mathbf{V}}\right\}^T$ and $\mathbf{J}=\Im\left\{\bar{\mathbf{V}}\right\}^T$ to alleviate the notation, \eqref{eq:tx-cond4} becomes
\begin{equation}
\begin{bmatrix}
\mathbf{0}_{N_S} & -\mathbf{J}\\
\mathbf{I}_{N_S} & \mathbf{R}
\end{bmatrix}
\begin{bmatrix}
\mathbf{B}_{11} & \mathbf{B}_{12}\\
\mathbf{B}_{21} & \mathbf{B}_{22}
\end{bmatrix}
=Y_0
\begin{bmatrix}
\mathbf{I}_{N_S} & -\mathbf{R}\\
\mathbf{0}_{N_S} & -\mathbf{J}
\end{bmatrix}.\label{eq:tx-cond5}
\end{equation}
To formalize the constraints on the blocks $\mathbf{B}_{11}$, $\mathbf{B}_{12}$, $\mathbf{B}_{21}$, and $\mathbf{B}_{22}$, we assume with no loss of generality that the considered stem-connected \gls{milac} has the $N_S-1$ output ports $V_{N_S+1},\ldots,V_{2N_S-1}$ connected to all the others, as Fig.~\ref{fig:center-tx}(a).
Accordingly, we have $\mathbf{B}\in\mathcal{B}_F$, where $\mathcal{B}_F$ is given by \eqref{eq:B-set-tx}.
This means that the only constraints on $\mathbf{B}_{11}$, $\mathbf{B}_{12}$, and $\mathbf{B}_{21}$ are $\mathbf{B}_{11}=\mathbf{B}_{11}^T$ and $\mathbf{B}_{21}=\mathbf{B}_{12}^T$ (due to the symmetry of $\mathbf{B}$, and since the $N_S$ input ports of the \gls{milac} are connected to all the others as required by Proposition~\ref{pro:tx}), while $\mathbf{B}_{22}$ satisfies $\mathbf{B}_{22}=\mathbf{B}_{22}^T$ and $\mathbf{B}_{22}\in\mathcal{B}_{22}$, where
\begin{multline}
\mathcal{B}_{22}=\left\{\mathbf{B}\in\mathbb{R}^{N_T\times N_T}\mid\left[\mathbf{B}\right]_{n,m}=0,\;\forall n\neq m,\right.\\
n>N_S-1,\;m>N_S-1\Bigl\}.\label{eq:B-set-22}
\end{multline}

\begin{figure}[t]
\centering
\includegraphics[height=0.44\textwidth]{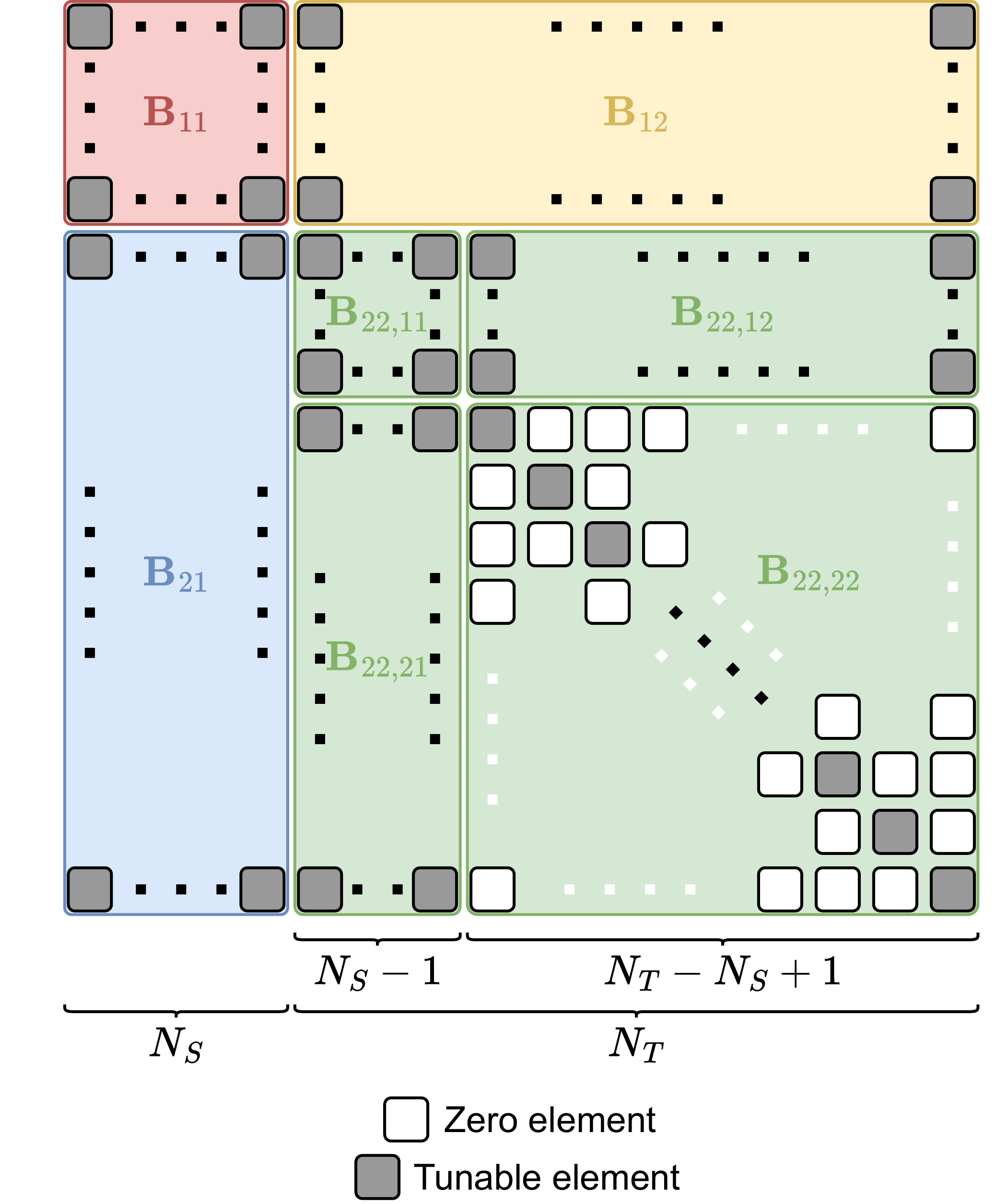}
\caption{Admittance matrix $\mathbf{B}$ of a stem-connected MiLAC at the
transmitter partitioned into seven matrices, where $\mathbf{B}_{22,22}$ is diagonal.}
\label{fig:B-tx}
\end{figure}

As a result, the problem of optimizing a stem-connected \gls{milac} through a capacity-achieving solution can be formalized as a feasibility check problem, where we need to find a valid $\mathbf{B}$ partitioned as in \eqref{eq:B-part} whose blocks satisfy the capacity-achieving condition \eqref{eq:tx-cond5}.
Formally, we need to solve
\begin{align}
\mathrm{find}\;\;
&\mathbf{B}_{11},\;\mathbf{B}_{12},\;\mathbf{B}_{21},\;\mathbf{B}_{22}\label{eq:tx-prob}\\
\mathsf{\mathrm{s.t.}}\;\;\;
&\mathbf{B}_{11}=\mathbf{B}_{11}^T,\;\mathbf{B}_{21}=\mathbf{B}_{12}^T,\;\mathbf{B}_{22}=\mathbf{B}_{22}^T,\label{eq:tx-symm}\\
&\mathbf{B}_{22}\in\mathcal{B}_{22},\;\eqref{eq:B-set-22}\label{eq:tx-graph}\\
&\mathbf{J}\mathbf{B}_{21}=-Y_0\mathbf{I}_{N_S},\label{eq:tx-case1}\\
&\mathbf{J}\mathbf{B}_{22}=Y_0\mathbf{R},\label{eq:tx-case2}\\
&\mathbf{B}_{11}+\mathbf{R}\mathbf{B}_{21}=\mathbf{0}_{N_S},\label{eq:tx-case3}\\
&\mathbf{B}_{12}+\mathbf{R}\mathbf{B}_{22}=-Y_0\mathbf{J},\label{eq:tx-case4}
\end{align}
where \eqref{eq:tx-symm} and \eqref{eq:tx-graph} follow from the constraints on $\mathbf{B}$, and \eqref{eq:tx-case1}-\eqref{eq:tx-case4} follow from the capacity-achieving condition \eqref{eq:tx-cond5}.
In the following subsections, we solve \eqref{eq:tx-prob}-\eqref{eq:tx-case4} by first finding a valid block $\mathbf{B}_{22}$ fulfilling \eqref{eq:tx-case2}, and then finding valid blocks $\mathbf{B}_{11}$, $\mathbf{B}_{12}$, and $\mathbf{B}_{21}$ fulfilling the other constraints in \eqref{eq:tx-prob}-\eqref{eq:tx-case4}.

\subsection{Block $\mathbf{B}_{22}$}
\label{sec:B22}

We first find the block $\mathbf{B}_{22}$ subject to $\mathbf{B}_{22}=\mathbf{B}_{22}^T$ and $\mathbf{B}_{22}\in\mathcal{B}_{22}$ by solving \eqref{eq:tx-case2}.
To this end, we conveniently partition the matrix $\mathbf{B}_{22}$ as
\begin{equation}
\mathbf{B}_{22}=
\begin{bmatrix}
\mathbf{B}_{22,11} & \mathbf{B}_{22,12}\\
\mathbf{B}_{22,21} & \mathbf{B}_{22,22}
\end{bmatrix},\label{eq:B22-part}
\end{equation}
where $\mathbf{B}_{22,11}\in\mathbb{R}^{(N_S-1)\times (N_S-1)}$ is symmetric, $\mathbf{B}_{22,12}\in\mathbb{R}^{(N_S-1)\times (N_T-N_S+1)}$ and $\mathbf{B}_{22,21}\in\mathbb{R}^{(N_T-N_S+1)\times (N_S-1)}$ are such that $\mathbf{B}_{22,21}=\mathbf{B}_{22,12}^T$, and $\mathbf{B}_{22,22}\in\mathbb{R}^{(N_T-N_S+1)\times (N_T-N_S+1)}$ is diagonal since $\mathbf{B}_{22}\in\mathcal{B}_{22}$.
Note that by substituting \eqref{eq:B22-part} into \eqref{eq:B-part}, we obtain that the matrix $\mathbf{B}$ is partitioned into a total of seven matrices, as graphically represented in Fig.~\ref{fig:B-tx}.

By considering the partition in \eqref{eq:B22-part} and the partitions $\mathbf{R}=[\mathbf{R}_1,\mathbf{R}_2]$, with $\mathbf{R}_1\in\mathbb{R}^{N_S\times (N_S-1)}$ and $\mathbf{R}_2\in\mathbb{R}^{N_S\times (N_T-N_S+1)}$ and $\mathbf{J}=[\mathbf{J}_1,\mathbf{J}_2]$, with $\mathbf{J}_1\in\mathbb{R}^{N_S\times (N_S-1)}$ and $\mathbf{J}_2\in\mathbb{R}^{N_S\times (N_T-N_S+1)}$, \eqref{eq:tx-case2} can be rewritten as
\begin{equation}
\begin{bmatrix}
\mathbf{J}_1 & \mathbf{J}_2
\end{bmatrix}
\begin{bmatrix}
\mathbf{B}_{22,11} & \mathbf{B}_{22,12}\\
\mathbf{B}_{22,21} & \mathbf{B}_{22,22}
\end{bmatrix}
=Y_0
\begin{bmatrix}
\mathbf{R}_1 & \mathbf{R}_2
\end{bmatrix},
\end{equation}
which is equivalent to a system of two linear matrix equations 
\begin{numcases}{}
\mathbf{J}_1\mathbf{B}_{22,11} + \mathbf{J}_2\mathbf{B}_{22,21} = Y_0\mathbf{R}_1,\label{eq:B22-cond-1}\\
\mathbf{J}_1\mathbf{B}_{22,12} + \mathbf{J}_2\mathbf{B}_{22,22} = Y_0\mathbf{R}_2.\label{eq:B22-cond-2}
\end{numcases}
In the following, we find the values for the blocks of $\mathbf{B}_{22}$ fulfilling this system of equations.
We first set the blocks $\mathbf{B}_{22,12}$, $\mathbf{B}_{22,21}$, and $\mathbf{B}_{22,22}$, and then optimize $\mathbf{B}_{22,11}$ accordingly.

\subsubsection{Blocks $\mathbf{B}_{22,12}$, $\mathbf{B}_{22,21}$, and $\mathbf{B}_{22,22}$}
\label{sec:B22-blocks}

Denote as $\mathbf{r}_i\in\mathbb{R}^{N_S\times1}$ the $i$th column of $\mathbf{R}$ and as $\mathbf{j}_i\in\mathbb{R}^{N_S\times1}$ the $i$th column of $\mathbf{J}$, for $i=1,\ldots,N_T$.
Thus, we claim that the $i$th diagonal element of the diagonal matrix $\mathbf{B}_{22,22}$ is given by
\begin{equation}
\left[\mathbf{B}_{22,22}\right]_{i,i}=Y_0
\left[\left[\mathbf{J}_1,\mathbf{j}_{N_S-1+i}\right]^{-1}\mathbf{r}_{N_S-1+i}\right]_{N_S},\label{eq:B2222ii}
\end{equation}
for $i=1,\ldots,N_T-N_S+1$.
In addition, we claim that the $(j,i)$th element of $\mathbf{B}_{22,12}$ is given by
\begin{equation}
\left[\mathbf{B}_{22,12}\right]_{j,i}=Y_0
\left[\left[\mathbf{J}_1,\mathbf{j}_{N_S-1+i}\right]^{-1}\mathbf{r}_{N_S-1+i}\right]_{j},\label{eq:B2212ji}
\end{equation}
for $j=1,\ldots,N_S-1$ and $i=1,\ldots,N_T-N_S+1$.
These claims can be immediately verified by checking that \eqref{eq:B2222ii} and \eqref{eq:B2212ji} satisfy \eqref{eq:B22-cond-2}, i.e.,
\begin{equation}
\left[\mathbf{J}_1\mathbf{B}_{22,12} + \mathbf{J}_2\mathbf{B}_{22,22}\right]_{j,i}
=Y_0
\left[\mathbf{R}_2\right]_{j,i},\label{eq:B22-cond-2-2}
\end{equation}
for $j=1,\ldots,N_S$ and $i=1,\ldots,N_T-N_S+1$.
By substituting \eqref{eq:B2222ii} and \eqref{eq:B2212ji} into the left side of \eqref{eq:B22-cond-2-2}, we have
\begin{multline}
\left[\mathbf{J}_1\mathbf{B}_{22,12} + \mathbf{J}_2\mathbf{B}_{22,22}\right]_{j,i}
=\left[\mathbf{J}_1\mathbf{B}_{22,12}\right]_{j,i} + \left[\mathbf{J}_2\mathbf{B}_{22,22}\right]_{j,i}\\
=Y_0\left[\mathbf{J}_1\right]_{j,:}
\left[\left[\mathbf{J}_1,\mathbf{j}_{N_S-1+i}\right]^{-1}\mathbf{r}_{N_S-1+i}\right]_{1:N_S-1}\\
+Y_0\left[\mathbf{J}_2\right]_{j,i}
\left[\left[\mathbf{J}_1,\mathbf{j}_{N_S-1+i}\right]^{-1}\mathbf{r}_{N_S-1+i}\right]_{N_S}\\
=Y_0\left[\mathbf{J}_1\right]_{j,:}
\left[\left[\mathbf{J}_1,\mathbf{j}_{N_S-1+i}\right]^{-1}\right]_{1:N_S-1,:}\mathbf{r}_{N_S-1+i}\\
+Y_0\left[\mathbf{j}_{N_S-1+i}\right]_{j}
\left[\left[\mathbf{J}_1,\mathbf{j}_{N_S-1+i}\right]^{-1}\right]_{N_S,:}\mathbf{r}_{N_S-1+i}\\
=Y_0\left[\mathbf{J}_1,\mathbf{j}_{N_S-1+i}\right]_{j,:}
\left[\mathbf{J}_1,\mathbf{j}_{N_S-1+i}\right]^{-1}\mathbf{r}_{N_S-1+i}\\
=Y_0\left[\mathbf{r}_{N_S-1+i}\right]_{j}
=Y_0\left[\mathbf{R}_2\right]_{j,i},
\end{multline}
confirming that \eqref{eq:B2222ii} and \eqref{eq:B2212ji} are optimal solutions as they satisfy \eqref{eq:B22-cond-2}.
Since $\mathbf{B}_{22,12}$ is given by \eqref{eq:B2212ji}, we also know $\mathbf{B}_{22,21}$, which can be computed as $\mathbf{B}_{22,21}=\mathbf{B}_{22,12}^T$.

\subsubsection{Block $\mathbf{B}_{22,11}$}
\label{sec:B22-11}

\begin{algorithm}[t]
\begin{algorithmic}[1]
\setstretch{1.2}
\REQUIRE $\bar{\mathbf{V}}$.
\ENSURE $\mathbf{B}$.
\STATEx{$\mathbf{R}=\left[\mathbf{R}_1,\mathbf{R}_2\right]=\Re\left\{\bar{\mathbf{V}}\right\}^T$,
$\mathbf{J}=\left[\mathbf{J}_1,\mathbf{J}_2\right]=\Im\left\{\bar{\mathbf{V}}\right\}^T$,}
\STATEx{$\mathbf{r}_i=[\mathbf{R}]_{:,i}$,
$\mathbf{j}_i=[\mathbf{J}]_{:,i}$,}
\STATEx{$\mathbf{J}_1=\left[\mathbf{U}_{J,1},\mathbf{u}_{J,2}\right]\left[\mathbf{\Sigma}_{J},\mathbf{0}_{(N_S-1)\times1}\right]^T\mathbf{V}_{J}^T$.}
\STATE{\textcolor{mygreen}{$\left[\mathbf{B}_{22,22}\right]_{i,i}=Y_0
\left[\left[\mathbf{J}_1,\mathbf{j}_{N_S-1+i}\right]^{-1}\mathbf{r}_{N_S-1+i}\right]_{N_S}$.}}
\STATE{\textcolor{mygreen}{$\left[\mathbf{B}_{22,12}\right]_{j,i}=Y_0
\left[\left[\mathbf{J}_1,\mathbf{j}_{N_S-1+i}\right]^{-1}\mathbf{r}_{N_S-1+i}\right]_{j}$.}}
\STATE{\textcolor{mygreen}{$\mathbf{B}_{22,21}=\mathbf{B}_{22,12}^T$.}}
\STATE{\textcolor{mygreen}{$\mathbf{B}_{22,11} = \mathbf{V}_{J}\mathbf{\Sigma}_{J}^{-1}\mathbf{U}_{J,1}^T\left(Y_0\mathbf{R}_1-\mathbf{J}_2\mathbf{B}_{22,21}\right)$.}}
\STATEx{$\mathbf{B}_{22}=
\left[\left[\mathbf{B}_{22,11},\mathbf{B}_{22,12}\right]^T,
\left[\mathbf{B}_{22,21},\mathbf{B}_{22,22}\right]^T\right]^T$.}
\STATE{\textcolor{myyellow}{$\mathbf{B}_{12}=-Y_0\mathbf{J}-\mathbf{R}\mathbf{B}_{22}$.}}
\STATE{\textcolor{myblue}{$\mathbf{B}_{21}=\mathbf{B}_{12}^T$.}}
\STATE{\textcolor{myred}{$\mathbf{B}_{11}=-\mathbf{R}\mathbf{B}_{21}$.}}
\STATEx{$\mathbf{B}=
\left[\left[\mathbf{B}_{11},\mathbf{B}_{12}\right]^T,
\left[\mathbf{B}_{21},\mathbf{B}_{22}\right]^T\right]^T$.}
\end{algorithmic}
\caption{Optimization of a stem-connected MiLAC at the transmitter.}
\label{alg:tx}
\end{algorithm}

We now need to find a symmetric matrix $\mathbf{B}_{22,11}$ such that \eqref{eq:B22-cond-1} is satisfied, i.e.,
\begin{equation}
\mathbf{J}_1\mathbf{B}_{22,11}=Y_0\mathbf{R}_1-\mathbf{J}_2\mathbf{B}_{22,21},\label{eq:B22-cond-1-2}
\end{equation}
where $\mathbf{B}_{22,21}$ is given as $\mathbf{B}_{22,21}=\mathbf{B}_{22,12}^T$, with $\mathbf{B}_{22,12}$ set according to \eqref{eq:B2212ji}.
To solve the linear matrix equation in \eqref{eq:B22-cond-1-2} where the unknown matrix is symmetric, we introduce the following result.
\begin{proposition}
Given two matrices $\mathbf{A}\in\mathbb{R}^{M\times N}$ and $\mathbf{C}\in\mathbb{R}^{M\times N}$, consider the linear matrix equation
\begin{equation}
\mathbf{A}\mathbf{X}=\mathbf{C},\label{eq:symm-eq1}
\end{equation}
where the unknown matrix $\mathbf{X}\in\mathbb{R}^{N\times N}$ is constrained to be symmetric.
The \gls{svd} of $\mathbf{A}$ is
\begin{equation}
\mathbf{A}=
\mathbf{U}
\begin{bmatrix}
\mathbf{\Sigma} & \mathbf{0}_{R\times (N-R)}\\
\mathbf{0}_{(M-R)\times R} & \mathbf{0}_{(M-R)\times (N-R)}
\end{bmatrix}
\mathbf{V}^T,
\end{equation}
where $\mathbf{U}\in\mathbb{R}^{M\times M}$ contains the left singular vectors of $\mathbf{A}$, $\mathbf{\Sigma}\in\mathbb{R}^{R\times R}$ is a diagonal matrix containing the (positive) singular values of $\mathbf{A}$, with $R$ being the rank of $\mathbf{A}$, and $\mathbf{V}\in\mathbb{R}^{N\times N}$ contains the right singular vectors of $\mathbf{A}$.
We partition $\mathbf{U}=[\mathbf{U}_1,\mathbf{U}_2]$ with $\mathbf{U}_1\in\mathbb{R}^{M\times R}$ and $\mathbf{U}_2\in\mathbb{R}^{M\times (M-R)}$, and $\mathbf{V}=[\mathbf{V}_1,\mathbf{V}_2]$ with $\mathbf{V}_1\in\mathbb{R}^{N\times R}$ and $\mathbf{V}_2\in\mathbb{R}^{N\times (N-R)}$.
Thus, equation \eqref{eq:symm-eq1} has a symmetric solution if and only if $\mathbf{A}\mathbf{C}^T=\mathbf{C}\mathbf{A}^T$ and $\mathbf{U}_2^T\mathbf{C}=\mathbf{0}_{(M-R)\times N}$.
In this case, all solutions can be expressed as
\begin{equation}
\mathbf{X}=
\mathbf{V}_1\mathbf{\Sigma}^{-1}\mathbf{U}_1^T\mathbf{C}
+\mathbf{V}_2\mathbf{V}_2^T\mathbf{C}^T\mathbf{U}_1\mathbf{\Sigma}^{-1}\mathbf{V}_1^T
+\mathbf{V}_2\mathbf{G}\mathbf{V}_2^T,\label{eq:symm-sol1}
\end{equation}
where $\mathbf{G}\in\mathbb{R}^{(N-R)\times(N-R)}$ is any symmetric matrix.
\label{pro:symm-eq}
\end{proposition}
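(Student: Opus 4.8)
The plan is to diagonalize the equation with the \gls{svd} of $\mathbf{A}$ and then cleanly separate the solvability requirement from the symmetry requirement. First I would left-multiply \eqref{eq:symm-eq1} by $\mathbf{U}^T$ and use $\mathbf{A}=\mathbf{U}_1\mathbf{\Sigma}\mathbf{V}_1^T$ to split it into the two block rows $\mathbf{\Sigma}\mathbf{V}_1^T\mathbf{X}=\mathbf{U}_1^T\mathbf{C}$ and $\mathbf{0}=\mathbf{U}_2^T\mathbf{C}$. The second block row is exactly the solvability condition $\mathbf{U}_2^T\mathbf{C}=\mathbf{0}_{(M-R)\times N}$, which merely states that every column of $\mathbf{C}$ lies in the column space of $\mathbf{A}$; the first block row is equivalent to pinning down $\mathbf{V}_1^T\mathbf{X}=\mathbf{\Sigma}^{-1}\mathbf{U}_1^T\mathbf{C}$, i.e. it fixes the components of $\mathbf{X}$ along the row space of $\mathbf{A}$ and leaves $\mathbf{V}_2^T\mathbf{X}$ entirely free.

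Next I would impose symmetry through the orthogonal change of basis $\mathbf{W}=\mathbf{V}^T\mathbf{X}\mathbf{V}$, so that $\mathbf{X}=\mathbf{X}^T$ iff $\mathbf{W}=\mathbf{W}^T$, and $\mathbf{X}=\mathbf{V}\mathbf{W}\mathbf{V}^T$ can later be reassembled using $\mathbf{V}_1\mathbf{V}_1^T+\mathbf{V}_2\mathbf{V}_2^T=\mathbf{I}_N$. Partitioning $\mathbf{W}$ conformally with $[\mathbf{V}_1,\mathbf{V}_2]$, the constraint $\mathbf{V}_1^T\mathbf{X}=\mathbf{\Sigma}^{-1}\mathbf{U}_1^T\mathbf{C}$ fixes the whole first block row of $\mathbf{W}$, namely $\mathbf{W}_{11}=\mathbf{\Sigma}^{-1}\mathbf{U}_1^T\mathbf{C}\mathbf{V}_1$ and $\mathbf{W}_{12}=\mathbf{\Sigma}^{-1}\mathbf{U}_1^T\mathbf{C}\mathbf{V}_2$. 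Requiring $\mathbf{W}$ symmetric then forces the lower-left block to equal $\mathbf{W}_{12}^T$, leaves the lower-right block $\mathbf{W}_{22}=:\mathbf{G}$ as an arbitrary symmetric matrix, and imposes the single residual requirement that $\mathbf{W}_{11}$ be symmetric. Substituting these blocks back into $\mathbf{X}=\mathbf{V}\mathbf{W}\mathbf{V}^T$ and collapsing the first block row with $\mathbf{V}_1\mathbf{V}_1^T+\mathbf{V}_2\mathbf{V}_2^T=\mathbf{I}_N$ reproduces the three-term expression \eqref{eq:symm-sol1}, with $\mathbf{G}$ the free parameter.

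It then remains to identify the symmetry of $\mathbf{W}_{11}=\mathbf{\Sigma}^{-1}\mathbf{U}_1^T\mathbf{C}\mathbf{V}_1$ with the stated condition $\mathbf{A}\mathbf{C}^T=\mathbf{C}\mathbf{A}^T$. Necessity of both conditions is immediate and needs no \gls{svd}: if a symmetric $\mathbf{X}$ satisfies $\mathbf{A}\mathbf{X}=\mathbf{C}$, then $\mathbf{U}_2^T\mathbf{C}=\mathbf{U}_2^T\mathbf{A}\mathbf{X}=\mathbf{0}$ since $\mathbf{U}_2^T\mathbf{A}=\mathbf{0}$, and $\mathbf{A}\mathbf{C}^T=\mathbf{A}\mathbf{X}^T\mathbf{A}^T=\mathbf{A}\mathbf{X}\mathbf{A}^T=\mathbf{C}\mathbf{A}^T$ using $\mathbf{X}=\mathbf{X}^T$. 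For the converse I would use $\mathbf{U}_2^T\mathbf{C}=\mathbf{0}$ to write $\mathbf{C}=\mathbf{U}_1\mathbf{U}_1^T\mathbf{C}$, substitute $\mathbf{A}=\mathbf{U}_1\mathbf{\Sigma}\mathbf{V}_1^T$ into $\mathbf{A}\mathbf{C}^T=\mathbf{C}\mathbf{A}^T$, and peel off the common orthonormal factor $\mathbf{U}_1$ (left-multiply by $\mathbf{U}_1^T$, right-multiply by $\mathbf{U}_1$); this reduces $\mathbf{A}\mathbf{C}^T=\mathbf{C}\mathbf{A}^T$ to $\mathbf{\Sigma}\mathbf{P}^T=\mathbf{P}\mathbf{\Sigma}$ with $\mathbf{P}=\mathbf{U}_1^T\mathbf{C}\mathbf{V}_1$, which is in turn equivalent to $\mathbf{\Sigma}^{-1}\mathbf{P}=\mathbf{P}^T\mathbf{\Sigma}^{-1}$, i.e. exactly the symmetry of $\mathbf{W}_{11}$. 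The main obstacle is this last equivalence: I expect the delicate point to be the bookkeeping of where the consistency hypothesis $\mathbf{U}_2^T\mathbf{C}=\mathbf{0}$ is genuinely needed (it is precisely what lets one replace $\mathbf{C}$ by its projection onto the range of $\mathbf{A}$ and cancel $\mathbf{U}_1$), whereas directly verifying that \eqref{eq:symm-sol1} is symmetric and satisfies $\mathbf{A}\mathbf{X}=\mathbf{C}$ — via $\mathbf{V}_1^T\mathbf{V}_2=\mathbf{0}$ and $\mathbf{\Sigma}\mathbf{\Sigma}^{-1}=\mathbf{I}_R$ — is routine.
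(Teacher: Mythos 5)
Your proposal is correct, but it is not comparable to the paper's proof in the usual sense, because the paper does not actually prove this proposition: it defers entirely to an external result, citing \cite{hua90}. What you have written is a self-contained derivation of that cited theorem, and it holds up. The decomposition into the two block rows $\mathbf{\Sigma}\mathbf{V}_1^T\mathbf{X}=\mathbf{U}_1^T\mathbf{C}$ and $\mathbf{U}_2^T\mathbf{C}=\mathbf{0}$ correctly isolates consistency from symmetry; the change of basis $\mathbf{W}=\mathbf{V}^T\mathbf{X}\mathbf{V}$ correctly shows that the first block row of $\mathbf{W}$ is pinned to $\mathbf{\Sigma}^{-1}\mathbf{U}_1^T\mathbf{C}\mathbf{V}$ while $\mathbf{W}_{22}$ is a free symmetric parameter, and reassembling with $\mathbf{V}_1\mathbf{V}_1^T+\mathbf{V}_2\mathbf{V}_2^T=\mathbf{I}_N$ does reproduce \eqref{eq:symm-sol1}; and the final equivalence goes through exactly as you sketch, since with $\mathbf{P}=\mathbf{U}_1^T\mathbf{C}\mathbf{V}_1$ and $\mathbf{C}=\mathbf{U}_1\mathbf{U}_1^T\mathbf{C}$ one gets $\mathbf{A}\mathbf{C}^T-\mathbf{C}\mathbf{A}^T=\mathbf{U}_1\left(\mathbf{\Sigma}\mathbf{P}^T-\mathbf{P}\mathbf{\Sigma}\right)\mathbf{U}_1^T$, and conjugating $\mathbf{\Sigma}\mathbf{P}^T=\mathbf{P}\mathbf{\Sigma}$ by $\mathbf{\Sigma}^{-1}$ on both sides is precisely the symmetry of $\mathbf{W}_{11}=\mathbf{\Sigma}^{-1}\mathbf{P}$. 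You also correctly flag where the hypothesis $\mathbf{U}_2^T\mathbf{C}=\mathbf{0}$ is genuinely used (to replace $\mathbf{C}$ by its projection onto the range of $\mathbf{A}$ before cancelling $\mathbf{U}_1$). What the paper's citation buys is brevity; what your argument buys is a complete, verifiable proof that makes the paper self-contained, and it is in the same spirit as the standard SVD-based proof of the cited theorem. The only cosmetic caveat is that the statement writes the single-block solution as $\mathbf{V}_1\mathbf{\Sigma}^{-1}\mathbf{U}_1^T\mathbf{C}$ rather than splitting it over $\mathbf{V}_1\mathbf{V}_1^T$ and $\mathbf{V}_1\mathbf{W}_{12}\mathbf{V}_2^T$, which is exactly the collapse you perform, so nothing is missing.
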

\begin{proof}
This proposition follows from \cite[Theorem~1]{hua90}.
\end{proof}
From Proposition~\ref{pro:symm-eq}, we can prove a useful corollary.
\begin{corollary}
Consider the setup of Proposition~\ref{pro:symm-eq} with $M=N+1$ and $\mathbf{A}$ having full column rank, i.e., having rank $R=N$.
Thus, equation \eqref{eq:symm-eq1} has a symmetric solution if and only if $\mathbf{A}\mathbf{C}^T=\mathbf{C}\mathbf{A}^T$.
In this case, the only solution is given by
\begin{equation}
\mathbf{X}=\mathbf{V}\mathbf{\Sigma}^{-1}\mathbf{U}_1^T\mathbf{C}.\label{eq:symm-sol2}
\end{equation}
\label{cor:symm-eq}
\end{corollary}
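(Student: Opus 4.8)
The plan is to obtain the corollary as a direct specialization of Proposition~\ref{pro:symm-eq} to the case $M=N+1$, $R=N$, showing that in this regime both the solvability condition and the solution set collapse to the simpler claimed forms. First I would instantiate the blocks appearing in Proposition~\ref{pro:symm-eq}. Since $R=N$, we have $N-R=0$, so $\mathbf{V}_2\in\mathbb{R}^{N\times0}$ is empty and $\mathbf{V}_1=\mathbf{V}$; since $M-R=1$, the block $\mathbf{U}_2\in\mathbb{R}^{(N+1)\times1}$ reduces to a single column $\mathbf{u}_2$ spanning the one-dimensional left null space of $\mathbf{A}$, i.e. $\mathbf{u}_2^T\mathbf{A}=\mathbf{0}$. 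Substituting these into the general solution \eqref{eq:symm-sol1}, the two terms containing $\mathbf{V}_2$ vanish identically, leaving $\mathbf{X}=\mathbf{V}\mathbf{\Sigma}^{-1}\mathbf{U}_1^T\mathbf{C}$, which is exactly \eqref{eq:symm-sol2}. Because the free symmetric matrix $\mathbf{G}$ enters only through $\mathbf{V}_2\mathbf{G}\mathbf{V}_2^T$, its disappearance also establishes that the solution, when it exists, is unique.

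The key step is to show that under full column rank the second solvability condition of Proposition~\ref{pro:symm-eq}, namely $\mathbf{U}_2^T\mathbf{C}=\mathbf{0}$, is automatically implied by the first, $\mathbf{A}\mathbf{C}^T=\mathbf{C}\mathbf{A}^T$, so that the latter alone is necessary and sufficient. I would left-multiply the symmetry relation $\mathbf{A}\mathbf{C}^T=\mathbf{C}\mathbf{A}^T$ by $\mathbf{u}_2^T$: the left-hand side becomes $(\mathbf{u}_2^T\mathbf{A})\mathbf{C}^T=\mathbf{0}$, while the right-hand side is $(\mathbf{u}_2^T\mathbf{C})\mathbf{A}^T$, which therefore must also be zero. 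Transposing, the vector $\mathbf{w}=\mathbf{C}^T\mathbf{u}_2\in\mathbb{R}^N$ satisfies $\mathbf{A}\mathbf{w}=\mathbf{0}$; since $\mathbf{A}$ has full column rank its null space is trivial, forcing $\mathbf{w}=\mathbf{0}$, i.e. $\mathbf{u}_2^T\mathbf{C}=\mathbf{U}_2^T\mathbf{C}=\mathbf{0}$. Conversely, $\mathbf{A}\mathbf{C}^T=\mathbf{C}\mathbf{A}^T$ remains necessary, exactly as in Proposition~\ref{pro:symm-eq}, so the reduced condition is indeed both necessary and sufficient.

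The main (and essentially only) obstacle is this implication $\mathbf{A}\mathbf{C}^T=\mathbf{C}\mathbf{A}^T\Rightarrow\mathbf{U}_2^T\mathbf{C}=\mathbf{0}$; the remainder is bookkeeping about the empty and singleton blocks $\mathbf{V}_2$ and $\mathbf{U}_2$. Once that implication is in hand, combining it with the collapsed solution formula completes the argument: equation \eqref{eq:symm-eq1} admits a symmetric solution if and only if $\mathbf{A}\mathbf{C}^T=\mathbf{C}\mathbf{A}^T$, and in that case the unique such solution is $\mathbf{X}=\mathbf{V}\mathbf{\Sigma}^{-1}\mathbf{U}_1^T\mathbf{C}$. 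I would note that, although the corollary is stated for $M=N+1$ as this is the case arising in \eqref{eq:B22-cond-1-2} with $\mathbf{A}=\mathbf{J}_1$, the same reasoning applies whenever $\mathbf{A}$ has full column rank.
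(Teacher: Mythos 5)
Your proposal is correct and follows essentially the same route as the paper's proof: you show that full column rank collapses the general solution \eqref{eq:symm-sol1} to \eqref{eq:symm-sol2} (since $\mathbf{V}_2$ is empty, killing the free matrix $\mathbf{G}$ and yielding uniqueness), and you derive $\mathbf{u}_2^T\mathbf{C}=\mathbf{0}$ from $\mathbf{A}\mathbf{C}^T=\mathbf{C}\mathbf{A}^T$ via $\mathbf{u}_2^T\mathbf{A}=\mathbf{0}$ and the triviality of the null space of $\mathbf{A}$, exactly as the paper does. The only (harmless) addition is your closing remark that the argument extends beyond $M=N+1$ to any full-column-rank $\mathbf{A}$.
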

\begin{proof}
This corollary has two additional hypotheses beyond those in Proposition~\ref{pro:symm-eq}, i.e., $M=N+1$ and $\mathbf{A}$ has full column rank, and states two different conclusions from Proposition~\ref{pro:symm-eq}.
First, the necessary and sufficient condition for $\mathbf{A}\mathbf{X}=\mathbf{C}$ to have a symmetric solution is just $\mathbf{A}\mathbf{C}^T=\mathbf{C}\mathbf{A}^T$ instead of $\mathbf{A}\mathbf{C}^T=\mathbf{C}\mathbf{A}^T$ and $\mathbf{u}_2^T\mathbf{C}=\mathbf{0}_{1\times N}$.
Second, if $\mathbf{A}\mathbf{X}=\mathbf{C}$ is solvable, it has only one symmetric solution instead of the infinitely many given by \eqref{eq:symm-sol1}.
Thus, to prove this corollary, it suffices to prove that \textit{i)} $\mathbf{A}\mathbf{C}^T=\mathbf{C}\mathbf{A}^T$ implies $\mathbf{u}_2^T\mathbf{C}=\mathbf{0}_{1\times N}$ if $M=N+1$ and $\mathbf{A}$ has full column rank, and \textit{ii)} \eqref{eq:symm-sol1} boils down to \eqref{eq:symm-sol2} if $M=N+1$ and $\mathbf{A}$ has full column rank.

\textit{i)} Since $\mathbf{u}_2$ is the $(N+1)$th left singular vector of $\mathbf{A}$, which has rank $N$, we have $\mathbf{u}_2^T\mathbf{A}=\mathbf{0}_{1\times N}$.
Thus, $\mathbf{A}\mathbf{C}^T=\mathbf{C}\mathbf{A}^T$ implies $\mathbf{u}_2^T\mathbf{C}\mathbf{A}^T=\mathbf{0}_{1\times (N+1)}$.
In turn, $\mathbf{u}_2^T\mathbf{C}\mathbf{A}^T=\mathbf{0}_{1\times (N+1)}$ implies $\mathbf{u}_2^T\mathbf{C}=\mathbf{0}_{1\times N}$ since $\mathbf{A}$ has full column rank (recall that for any matrix $\mathbf{A}$ having full column rank, $\mathbf{k}\mathbf{A}^T=\mathbf{0}$ implies $\mathbf{k}=\mathbf{0}$).

\textit{ii)} It is easy to observe that \eqref{eq:symm-sol1} boils down to \eqref{eq:symm-sol2} if $\mathbf{A}$ has full column rank, i.e., $\mathbf{V}_1=\mathbf{V}$ and $\mathbf{V}_2$ is empty.
\end{proof}

Corollary~\ref{cor:symm-eq} can be applied to solve \eqref{eq:B22-cond-1-2} since $\mathbf{J}_1$ has dimensionality $N_S\times (N_S-1)$ and is full column rank with probability 1 given random channel realizations $\mathbf{H}$.
Therefore, \eqref{eq:B22-cond-1-2} has a symmetric solution $\mathbf{B}_{22,11}$ if and only if
\begin{equation}
\mathbf{J}_1\left(Y_0\mathbf{R}_1-\mathbf{J}_2\mathbf{B}_{22,21}\right)^T=\left(Y_0\mathbf{R}_1-\mathbf{J}_2\mathbf{B}_{22,21}\right)\mathbf{J}_1^T,\label{eq:cond-cor-tx}
\end{equation}
which can be shown as follows
\begin{align}
\mathbf{J}_1&\left(Y_0\mathbf{R}_1-\mathbf{J}_2\mathbf{B}_{22,21}\right)^T\\
&=Y_0\mathbf{J}_1\mathbf{R}_1^T-\mathbf{J}_1\mathbf{B}_{22,12}\mathbf{J}_2^T\\
&\stackrel{a}{=}Y_0\mathbf{R}_1\mathbf{J}_1^T-Y_0\mathbf{J}_2\mathbf{R}_2^T+Y_0\mathbf{R}_2\mathbf{J}_2^T-\mathbf{J}_1\mathbf{B}_{22,12}\mathbf{J}_2^T\\
&\stackrel{b}{=}Y_0\mathbf{R}_1\mathbf{J}_1^T-Y_0\mathbf{J}_2\mathbf{R}_2^T+\mathbf{J}_2\mathbf{B}_{22,22}\mathbf{J}_2^T\\
&=Y_0\mathbf{R}_1\mathbf{J}_1^T-\mathbf{J}_2\left(Y_0\mathbf{R}_2^T-\mathbf{B}_{22,22}\mathbf{J}_2^T\right)\\
&\stackrel{c}{=}Y_0\mathbf{R}_1\mathbf{J}_1^T-\mathbf{J}_2\mathbf{B}_{22,21}\mathbf{J}_1^T\\
&=\left(Y_0\mathbf{R}_1-\mathbf{J}_2\mathbf{B}_{22,21}\right)\mathbf{J}_1^T,
\end{align}
where
$\stackrel{a}{=}$ holds since $\mathbf{R}_1\mathbf{J}_1^T+\mathbf{R}_2\mathbf{J}_2^T=\mathbf{J}_1\mathbf{R}_1^T+\mathbf{J}_2\mathbf{R}_2^T$, which follows from $\mathbf{R}\mathbf{J}^T=\mathbf{J}\mathbf{R}^T$, verified since $\Re\{\bar{\mathbf{V}}\}^T\Im\{\bar{\mathbf{V}}\}=\Im\{\bar{\mathbf{V}}\}^T\Re\{\bar{\mathbf{V}}\}$ for any unitary matrix $\mathbf{V}=[\bar{\mathbf{V}},\tilde{\mathbf{V}}]$,
$\stackrel{b}{=}$ exploits $\mathbf{J}_2\mathbf{B}_{22,22}\mathbf{J}_2^T=Y_0\mathbf{R}_2\mathbf{J}_2^T-\mathbf{J}_1\mathbf{B}_{22,12}\mathbf{J}_2^T$ which follows from \eqref{eq:B22-cond-2},
$\stackrel{c}{=}$ exploits $(\mathbf{J}_1\mathbf{B}_{22,12})^T=(Y_0\mathbf{R}_2-\mathbf{J}_2\mathbf{B}_{22,22})^T$ following from \eqref{eq:B22-cond-2},
and the other equalities are straightforward.
Since we have verified \eqref{eq:cond-cor-tx}, Corollary~\ref{cor:symm-eq} states that \eqref{eq:B22-cond-1-2} has a unique symmetric solution as a function of the \gls{svd} of $\mathbf{J}_1$ and the matrix $Y_0\mathbf{R}_1-\mathbf{J}_2\mathbf{B}_{22,21}$.
Let the \gls{svd} of $\mathbf{J}_1$ be
\begin{equation}
\mathbf{J}_1=
\mathbf{U}_{J}
\begin{bmatrix}
\mathbf{\Sigma}_{J}\\
\mathbf{0}_{1\times (N_S-1)}
\end{bmatrix}
\mathbf{V}_{J}^T,\label{eq:svd-J1}
\end{equation}
where the matrix of left singular vectors $\mathbf{U}_{J}\in\mathbb{R}^{N_S\times N_S}$ is partitioned into $\mathbf{U}_{J}=[\mathbf{U}_{J,1},\mathbf{U}_{J,2}]$ with $\mathbf{U}_{J,1}\in\mathbb{R}^{N_S\times (N_S-1)}$ and $\mathbf{u}_{J,2}\in\mathbb{R}^{N_S\times 1}$, $\mathbf{\Sigma}_{J}\in\mathbb{R}^{(N_S-1)\times (N_S-1)}$ is a diagonal matrix containing the (positive) singular values, and $\mathbf{V}_{J}\in\mathbb{R}^{(N_S-1)\times (N_S-1)}$ is the matrix of right singular vectors.
Thus, $\mathbf{B}_{22,11}$ is given by
\begin{equation}
\mathbf{B}_{22,11}= \mathbf{V}_{J}\mathbf{\Sigma}_{J}^{-1}\mathbf{U}_{J,1}^T\left(Y_0\mathbf{R}_1-\mathbf{J}_2\mathbf{B}_{22,21}\right),
\end{equation}
according to Corollary~\ref{cor:symm-eq}.

We can now assemble the block $\mathbf{B}_{22}$ following \eqref{eq:B22-part} since we have derived the blocks $\mathbf{B}_{22,12}$, $\mathbf{B}_{22,21}$, and $\mathbf{B}_{22,22}$ in Section~\ref{sec:B22-blocks}, and the block $\mathbf{B}_{22,11}$ in Section~\ref{sec:B22-11}.

\subsection{Blocks $\mathbf{B}_{11}$, $\mathbf{B}_{12}$, and $\mathbf{B}_{21}$}

In Section~\ref{sec:B22}, we have derived in closed form the optimal value of the admittance matrix block $\mathbf{B}_{22}$.
With $\mathbf{B}_{22}$ known, the remaining blocks of the admittance matrix $\mathbf{B}_{12}$, $\mathbf{B}_{21}$, and $\mathbf{B}_{11}$, can be readily obtained as follows.
First, $\mathbf{B}_{12}$ can be found as a function of $\mathbf{B}_{22}$ by using \eqref{eq:tx-case4}, i.e.,
\begin{equation}
\mathbf{B}_{12}=-Y_0\mathbf{J}-\mathbf{R}\mathbf{B}_{22}.\label{eq:tx-B12}
\end{equation}
Second, recalling that $\mathbf{B}$ is symmetric, we have
\begin{equation}
\mathbf{B}_{21}=\mathbf{B}_{12}^T,\label{eq:tx-B21}
\end{equation}
with $\mathbf{B}_{12}$ set as in \eqref{eq:tx-B12}.
Third, the remaining block $\mathbf{B}_{11}$ can be found through \eqref{eq:tx-case3} as
\begin{equation}
\mathbf{B}_{11}=-\mathbf{R}\mathbf{B}_{21},\label{eq:tx-B11}
\end{equation}
with $\mathbf{B}_{21}$ given by \eqref{eq:tx-B21}.

We can now assemble $\mathbf{B}$ following \eqref{eq:B-part} since we have derived all its four blocks.
In Alg.~\ref{alg:tx}, we summarize the algorithm that allows us to find in closed form a capacity-achieving solution to the admittance matrix $\mathbf{B}$ of a stem-connected \gls{milac} at the transmitter.
From Alg.~\ref{alg:tx}, we notice that the computational complexity required to reconfigure a stem-connected \gls{milac} at each channel coherence time is driven by the complexity of computing the \gls{svd} of $\mathbf{H}$, i.e., $\mathcal{O}(N_T^2N_R)$ if $N_T\geq N_R$ or vice versa, same as the complexity to reconfigure a fully-connected \gls{milac} \cite{ner25-3}.
Besides, \gls{milac}-aided beamforming does not require any operation at each symbol time.

\subsection{Checking Solution Optimality}
\label{sec:tx-check}

Note that the solution in Alg.~\ref{alg:tx} is guaranteed by design to fulfill all the constraints of the feasibility check problem \eqref{eq:tx-prob}-\eqref{eq:tx-case4} except for two constraints that have not been used throughout our solution, namely \textit{i)} $\mathbf{B}_{11}=\mathbf{B}_{11}^T$ and \textit{ii)} $\mathbf{J}\mathbf{B}_{21}=-Y_0\mathbf{I}_{N_S}$.
For completeness, we show that these two constraints are satisfied in the following.
First, considering the block $\mathbf{B}_{11}$ set as in \eqref{eq:tx-B11}, we have
\begin{align}
\mathbf{B}_{11}^T
=-\mathbf{B}_{21}^T\mathbf{R}^T
&\stackrel{a}{=}Y_0\mathbf{J}\mathbf{R}^T+\mathbf{R}\mathbf{B}_{22}\mathbf{R}^T\\
&\stackrel{b}{=}Y_0\mathbf{R}\mathbf{J}^T+\mathbf{R}\mathbf{B}_{22}\mathbf{R}^T\\
&=\mathbf{R}\left(Y_0\mathbf{J}^T+\mathbf{B}_{22}\mathbf{R}^T\right)\\
&\stackrel{c}{=}-\mathbf{R}\mathbf{B}_{21}=\mathbf{B}_{11},
\end{align}
where
$\stackrel{a}{=}$ follows from \eqref{eq:tx-B12} and \eqref{eq:tx-B21},
$\stackrel{b}{=}$ follows from $\mathbf{R}\mathbf{J}^T=\mathbf{J}\mathbf{R}^T$, verified since $\Re\{\bar{\mathbf{V}}\}^T\Im\{\bar{\mathbf{V}}\}=\Im\{\bar{\mathbf{V}}\}^T\Re\{\bar{\mathbf{V}}\}$ for any unitary matrix $\mathbf{V}=[\bar{\mathbf{V}},\tilde{\mathbf{V}}]$, and
$\stackrel{c}{=}$ follows from $\mathbf{B}_{22}=\mathbf{B}_{22}^T$, \eqref{eq:tx-B12}, and \eqref{eq:tx-B21}, confirming that $\mathbf{B}_{11}=\mathbf{B}_{11}^T$.
Second, considering the block $\mathbf{B}_{21}$ set as in \eqref{eq:tx-B21}, we have
\begin{align}
\mathbf{J}\mathbf{B}_{21}
=\mathbf{J}\mathbf{B}_{12}^T
&\stackrel{a}{=}-\mathbf{J}\left(Y_0\mathbf{J}+\mathbf{R}\mathbf{B}_{22}\right)^T\\
&\stackrel{b}{=}-Y_0\mathbf{J}\mathbf{J}^T-\mathbf{J}\mathbf{B}_{22}\mathbf{R}^T\\
&\stackrel{c}{=}-Y_0\mathbf{J}\mathbf{J}^T-Y_0\mathbf{R}\mathbf{R}^T\stackrel{d}{=}-Y_0\mathbf{I}_{N_S},
\end{align}
where
$\stackrel{a}{=}$ follows from \eqref{eq:tx-B12},
$\stackrel{b}{=}$ follows from $\mathbf{B}_{22}=\mathbf{B}_{22}^T$,
$\stackrel{c}{=}$ follows from \eqref{eq:tx-case2}, and
$\stackrel{d}{=}$ follows from $\mathbf{J}\mathbf{J}^T+\mathbf{R}\mathbf{R}^T=\mathbf{I}_{N_S}$, verified since $\Im\{\bar{\mathbf{V}}\}^T\Im\{\bar{\mathbf{V}}\}+\Re\{\bar{\mathbf{V}}\}^T\Re\{\bar{\mathbf{V}}\}=\mathbf{I}$ for any unitary matrix $\mathbf{V}=[\bar{\mathbf{V}},\tilde{\mathbf{V}}]$, confirming that $\mathbf{J}\mathbf{B}_{21}=-Y_0\mathbf{I}_{N_S}$.

\section{Optimization of Capacity-Achieving Receiver-Side MiLAC Architectures}
\label{sec:rx}

In Section~\ref{sec:tx}, we have proven Proposition~\ref{pro:tx} by deriving a capacity-achieving solution to optimize stem-connected \glspl{milac} at the transmitter.
In this section, we prove Proposition~\ref{pro:rx} by extending the discussion of Section~\ref{sec:tx} to stem-connected \glspl{milac} at the receiver.
Although the optimization of receiver-side \glspl{milac} is analogous to that of transmitter-side \glspl{milac}, it requires a different partition of the \gls{milac} admittance matrix and hence deserves separate consideration.
To simplify the notation, only in this section, we denote the scattering, admittance, and susceptance matrices of the considered \gls{milac} at the receiver as $\boldsymbol{\Theta}$, $\mathbf{Y}$, and $\mathbf{B}$ instead of $\boldsymbol{\Theta}_G$, $\mathbf{Y}_G$, and $\mathbf{B}_G$.

\begin{figure}[t]
\centering
\includegraphics[height=0.44\textwidth]{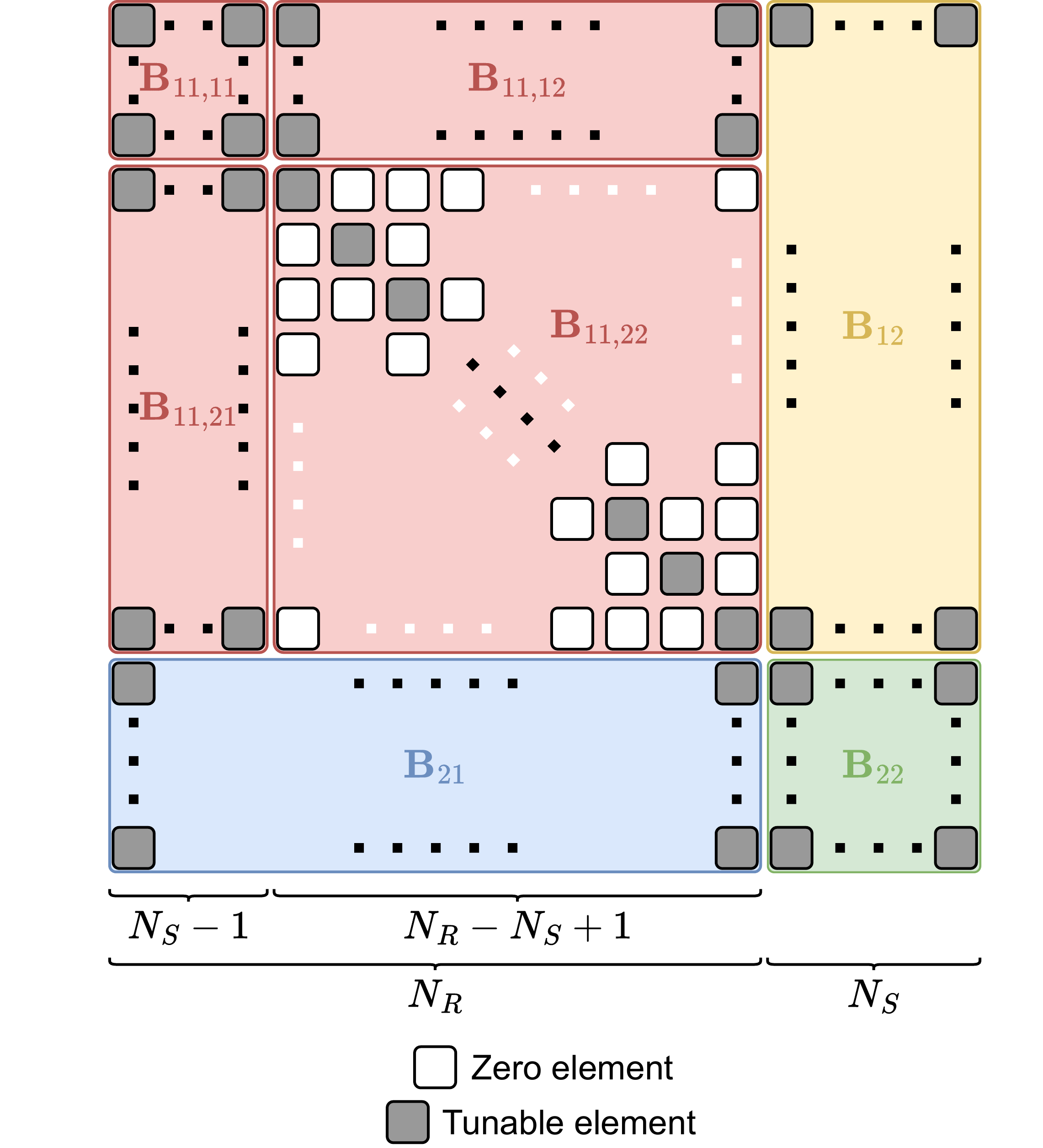}
\caption{Admittance matrix $\mathbf{B}$ of a stem-connected MiLAC at the
receiver partitioned into seven matrices, where $\mathbf{B}_{11,22}$ is diagonal.}
\label{fig:B-rx}
\end{figure}

\subsection{Reformulation of the Capacity-Achieving Condition}

Recalling that $\boldsymbol{\Theta}$ can be expressed as a function of the admittance matrix $\mathbf{Y}=j\mathbf{B}$ as in \eqref{eq:TF(YF)}, the capacity-achieving condition in \eqref{eq:rx-cond1} can be rewritten as
\begin{equation}
\left(Y_0\mathbf{I}-j\mathbf{B}\right)
\begin{bmatrix}
\mathbf{0}_{N_T\times N_S}\\
\mathbf{I}_{N_S}
\end{bmatrix}
=
\left(Y_0\mathbf{I}+j\mathbf{B}\right)
\begin{bmatrix}
\bar{\mathbf{U}}^*\\
\mathbf{0}_{N_S}
\end{bmatrix},
\end{equation}
as a function of $\mathbf{B}$, which can be simplified as
\begin{equation}
j\mathbf{B}
\begin{bmatrix}
\bar{\mathbf{U}}^*\\
\mathbf{I}_{N_S}
\end{bmatrix}
=Y_0
\begin{bmatrix}
-\bar{\mathbf{U}}^*\\
\mathbf{I}_{N_S}
\end{bmatrix},
\end{equation}
or, equivalently,
\begin{equation}
j
\begin{bmatrix}
\bar{\mathbf{U}}^H & \mathbf{I}_{N_S}
\end{bmatrix}
\mathbf{B}=Y_0
\begin{bmatrix}
-\bar{\mathbf{U}}^H & \mathbf{I}_{N_S}
\end{bmatrix},\label{eq:rx-cond3}
\end{equation}
given the symmetry of $\mathbf{B}$.
Since \eqref{eq:rx-cond3} is a linear matrix equation with real unknowns and complex coefficients, we rewrite it in real coefficients as
\begin{equation}
\begin{bmatrix}
\Re\left\{j
\begin{bmatrix}
\bar{\mathbf{U}}^H & \mathbf{I}_{N_S}
\end{bmatrix}\right\}\\
\Im\left\{j
\begin{bmatrix}
\bar{\mathbf{U}}^H & \mathbf{I}_{N_S}
\end{bmatrix}\right\}
\end{bmatrix}
\mathbf{B}=
\begin{bmatrix}
\Re\left\{Y_0
\begin{bmatrix}
-\bar{\mathbf{U}}^H & \mathbf{I}_{N_S}
\end{bmatrix}\right\}\\
\Im\left\{Y_0
\begin{bmatrix}
-\bar{\mathbf{U}}^H & \mathbf{I}_{N_S}
\end{bmatrix}\right\}
\end{bmatrix},
\end{equation}
which simplifies to
\begin{equation}
\begin{bmatrix}
\Im\left\{\bar{\mathbf{U}}\right\}^T & \mathbf{0}_{N_S}\\
\Re\left\{\bar{\mathbf{U}}\right\}^T & \mathbf{I}_{N_S}
\end{bmatrix}
\mathbf{B}=Y_0
\begin{bmatrix}
-\Re\left\{\bar{\mathbf{U}}\right\}^T & \mathbf{I}_{N_S}\\
\Im\left\{\bar{\mathbf{U}}\right\}^T & \mathbf{0}_{N_S}
\end{bmatrix}.\label{eq:rx-cond4}
\end{equation}
Similar to what done at the transmitter-side in Section~\ref{sec:tx}, we consider the partition of $\mathbf{B}$ in \eqref{eq:B-part}, where $\mathbf{B}_{11}\in\mathbb{R}^{N_R\times N_R}$, $\mathbf{B}_{12}\in\mathbb{R}^{N_R\times N_S}$, $\mathbf{B}_{21}\in\mathbb{R}^{N_S\times N_R}$, and $\mathbf{B}_{22}\in\mathbb{R}^{N_S\times N_S}$, and introduce $\mathbf{R}\in\mathbb{R}^{N_S\times N_R}$ and $\mathbf{J}\in\mathbb{R}^{N_S\times N_R}$ as $\mathbf{R}=\Re\left\{\bar{\mathbf{U}}\right\}^T$ and $\mathbf{J}=\Im\left\{\bar{\mathbf{U}}\right\}^T$, such that \eqref{eq:rx-cond4} becomes
\begin{equation}
\begin{bmatrix}
\mathbf{J} & \mathbf{0}_{N_S}\\
\mathbf{R} & \mathbf{I}_{N_S}
\end{bmatrix}
\begin{bmatrix}
\mathbf{B}_{11} & \mathbf{B}_{12}\\
\mathbf{B}_{21} & \mathbf{B}_{22}
\end{bmatrix}
=Y_0
\begin{bmatrix}
-\mathbf{R} & \mathbf{I}_{N_S}\\
\mathbf{J} & \mathbf{0}_{N_S}
\end{bmatrix}.\label{eq:rx-cond5}
\end{equation}
To formalize the constraints on the blocks $\mathbf{B}_{11}$, $\mathbf{B}_{12}$, $\mathbf{B}_{21}$, and $\mathbf{B}_{22}$, we assume with no loss of generality that the considered stem-connected \gls{milac} has the $N_S-1$ input ports $V_{1},\ldots,V_{N_S-1}$ connected to all the others, as Fig.~\ref{fig:center-rx}(a).
With this assumption, we have $\mathbf{B}\in\mathcal{B}_G$, where $\mathcal{B}_G$ is given by \eqref{eq:B-set-rx}.
Consequently, the only constraints on $\mathbf{B}_{12}$, $\mathbf{B}_{21}$, and $\mathbf{B}_{22}$ are $\mathbf{B}_{21}=\mathbf{B}_{12}^T$ and $\mathbf{B}_{22}=\mathbf{B}_{22}^T$ (due to the symmetry of $\mathbf{B}$, and since the $N_S$ output ports of the \gls{milac} are connected to all the others as required by Proposition~\ref{pro:rx}), while $\mathbf{B}_{11}$ satisfies $\mathbf{B}_{11}=\mathbf{B}_{11}^T$ and $\mathbf{B}_{11}\in\mathcal{B}_{11}$, where
\begin{multline}
\mathcal{B}_{11}=\left\{\mathbf{B}\in\mathbb{R}^{N_R\times N_R}\mid\left[\mathbf{B}\right]_{n,m}=0,\;\forall n\neq m,\right.\\
n>N_S-1,\;m>N_S-1\Bigl\}.\label{eq:B-set-11}
\end{multline}

In summary, the problem of finding a capacity-achieving solution for a stem-connected \gls{milac} at the receiver side is
\begin{align}
\mathrm{find}\;\;
&\mathbf{B}_{11},\;\mathbf{B}_{12},\;\mathbf{B}_{21},\;\mathbf{B}_{22}\label{eq:rx-prob}\\
\mathsf{\mathrm{s.t.}}\;\;\;
&\mathbf{B}_{11}=\mathbf{B}_{11}^T,\;\mathbf{B}_{21}=\mathbf{B}_{12}^T,\;\mathbf{B}_{22}=\mathbf{B}_{22}^T,\label{eq:rx-symm}\\
&\mathbf{B}_{11}\in\mathcal{B}_{11},\;\eqref{eq:B-set-11}\label{eq:rx-graph}\\
&\mathbf{J}\mathbf{B}_{11}=-Y_0\mathbf{R},\label{eq:rx-case1}\\
&\mathbf{J}\mathbf{B}_{12}=Y_0\mathbf{I}_{N_S},\label{eq:rx-case2}\\
&\mathbf{R}\mathbf{B}_{11}+\mathbf{B}_{21}=Y_0\mathbf{J},\label{eq:rx-case3}\\
&\mathbf{R}\mathbf{B}_{12}+\mathbf{B}_{22}=\mathbf{0}_{N_S},\label{eq:rx-case4}
\end{align}
where \eqref{eq:rx-symm} and \eqref{eq:rx-graph} follow from the constraints on $\mathbf{B}$, and \eqref{eq:rx-case1}-\eqref{eq:rx-case4} follow from the capacity-achieving condition \eqref{eq:rx-cond5}.

\begin{algorithm}[t]
\begin{algorithmic}[1]
\setstretch{1.25}
\REQUIRE $\bar{\mathbf{U}}$.
\ENSURE $\mathbf{B}$.
\STATEx{$\mathbf{R}=\left[\mathbf{R}_1,\mathbf{R}_2\right]=\Re\left\{\bar{\mathbf{U}}\right\}^T$,
$\mathbf{J}=\left[\mathbf{J}_1,\mathbf{J}_2\right]=\Im\left\{\bar{\mathbf{U}}\right\}^T$,}
\STATEx{$\mathbf{r}_i=[\mathbf{R}]_{:,i}$,
$\mathbf{j}_i=[\mathbf{J}]_{:,i}$,}
\STATEx{$\mathbf{J}_1=\left[\mathbf{U}_{J,1},\mathbf{u}_{J,2}\right]\left[\mathbf{\Sigma}_{J},\mathbf{0}_{(N_S-1)\times1}\right]^T\mathbf{V}_{J}^T$.}
\STATE{\textcolor{myred}{$\left[\mathbf{B}_{11,22}\right]_{i,i}=-Y_0
\left[\left[\mathbf{J}_1,\mathbf{j}_{N_S-1+i}\right]^{-1}\mathbf{r}_{N_S-1+i}\right]_{N_S}$.}}
\STATE{\textcolor{myred}{$\left[\mathbf{B}_{11,12}\right]_{j,i}=-Y_0
\left[\left[\mathbf{J}_1,\mathbf{j}_{N_S-1+i}\right]^{-1}\mathbf{r}_{N_S-1+i}\right]_{j}$.}}
\STATE{\textcolor{myred}{$\mathbf{B}_{11,21}=\mathbf{B}_{11,12}^T$.}}
\STATE{\textcolor{myred}{$\mathbf{B}_{11,11} = -\mathbf{V}_{J}\mathbf{\Sigma}_{J}^{-1}\mathbf{U}_{J,1}^T\left(Y_0\mathbf{R}_1+\mathbf{J}_2\mathbf{B}_{11,21}\right)$.}}
\STATEx{$\mathbf{B}_{11}=
\left[\left[\mathbf{B}_{22,11},\mathbf{B}_{22,12}\right]^T,
\left[\mathbf{B}_{22,21},\mathbf{B}_{22,22}\right]^T\right]^T$.}
\STATE{\textcolor{myblue}{$\mathbf{B}_{21}=Y_0\mathbf{J}-\mathbf{R}\mathbf{B}_{11}$.}}
\STATE{\textcolor{myyellow}{$\mathbf{B}_{12}=\mathbf{B}_{21}^T$.}}
\STATE{\textcolor{mygreen}{$\mathbf{B}_{22}=-\mathbf{R}\mathbf{B}_{12}$.}}
\STATEx{$\mathbf{B}=
\left[\left[\mathbf{B}_{11},\mathbf{B}_{12}\right]^T,
\left[\mathbf{B}_{21},\mathbf{B}_{22}\right]^T\right]^T$.}
\end{algorithmic}
\caption{Optimization of a stem-connected MiLAC at the receiver.}
\label{alg:rx}
\end{algorithm}

\subsection{Block $\mathbf{B}_{11}$}
\label{sec:B11}

We first determine $\mathbf{B}_{11}$ by solving \eqref{eq:rx-case1}.
To this end, we partition $\mathbf{B}_{11}$ as
\begin{equation}
\mathbf{B}_{11}=
\begin{bmatrix}
\mathbf{B}_{11,11} & \mathbf{B}_{11,12}\\
\mathbf{B}_{11,21} & \mathbf{B}_{11,22}
\end{bmatrix},\label{eq:B11-part}
\end{equation}
where $\mathbf{B}_{11,11}\in\mathbb{R}^{(N_S-1)\times (N_S-1)}$ is symmetric, $\mathbf{B}_{11,12}\in\mathbb{R}^{(N_S-1)\times (N_R-N_S+1)}$ and $\mathbf{B}_{11,21}\in\mathbb{R}^{(N_R-N_S+1)\times (N_S-1)}$ are such that $\mathbf{B}_{11,21}=\mathbf{B}_{11,12}^T$, and $\mathbf{B}_{11,22}\in\mathbb{R}^{(N_R-N_S+1)\times (N_R-N_S+1)}$ is diagonal since $\mathbf{B}_{11}\in\mathcal{B}_{11}$.
By substituting \eqref{eq:B11-part} into \eqref{eq:B-part}, we have that $\mathbf{B}$ is partitioned into a total of seven matrices, as represented in Fig.~\ref{fig:B-rx}.

By considering \eqref{eq:B11-part} and the previously introduced partitions $\mathbf{J}=[\mathbf{J}_1,\mathbf{J}_2]$ and $\mathbf{R}=[\mathbf{R}_1,\mathbf{R}_2]$, \eqref{eq:rx-case1} can be rewritten as
\begin{equation}
\begin{bmatrix}
\mathbf{J}_1 & \mathbf{J}_2
\end{bmatrix}
\begin{bmatrix}
\mathbf{B}_{11,11} & \mathbf{B}_{11,12}\\
\mathbf{B}_{11,21} & \mathbf{B}_{11,22}
\end{bmatrix}
=-Y_0
\begin{bmatrix}
\mathbf{R}_1 & \mathbf{R}_2
\end{bmatrix},
\end{equation}
which is equivalent to the system
\begin{numcases}{}
\mathbf{J}_1\mathbf{B}_{11,11} + \mathbf{J}_2\mathbf{B}_{11,21} = -Y_0\mathbf{R}_1,\label{eq:B11-cond-1}\\
\mathbf{J}_1\mathbf{B}_{11,12} + \mathbf{J}_2\mathbf{B}_{11,22} = -Y_0\mathbf{R}_2.\label{eq:B11-cond-2}
\end{numcases}
In the following, we derive solutions for the blocks of $\mathbf{B}_{11}$ fulfilling this system of matrix equations.
We first set $\mathbf{B}_{11,12}$, $\mathbf{B}_{11,21}$, and $\mathbf{B}_{11,22}$, and then optimize $\mathbf{B}_{11,11}$ accordingly.

\subsubsection{Blocks $\mathbf{B}_{11,12}$, $\mathbf{B}_{11,21}$, and $\mathbf{B}_{11,22}$}
\label{sec:B11-blocks}

Similar to what discussed in Section~\ref{sec:B22-blocks}, it is possible to show that \eqref{eq:B11-cond-2} is satisfied by setting the $i$th diagonal element of the diagonal matrix $\mathbf{B}_{11,22}$ as
\begin{equation}
\left[\mathbf{B}_{11,22}\right]_{i,i}=-Y_0
\left[\left[\mathbf{J}_1,\mathbf{j}_{N_S-1+i}\right]^{-1}\mathbf{r}_{N_S-1+i}\right]_{N_S},\label{eq:B1122ii}
\end{equation}
and the $(j,i)$th element of $\mathbf{B}_{11,12}$ as
\begin{equation}
\left[\mathbf{B}_{11,12}\right]_{j,i}=-Y_0
\left[\left[\mathbf{J}_1,\mathbf{j}_{N_S-1+i}\right]^{-1}\mathbf{r}_{N_S-1+i}\right]_{j},\label{eq:B1112ji}
\end{equation}
for $j=1,\ldots,N_S-1$ and $i=1,\ldots,N_R-N_S+1$, where $\mathbf{r}_i$ and $\mathbf{j}_i$ the $i$th column of $\mathbf{R}$ and $\mathbf{J}$, respectively, for $i=1,\ldots,N_R$.
According to $\mathbf{B}_{11,12}$ given by \eqref{eq:B1112ji}, $\mathbf{B}_{11,21}$ can be computed as $\mathbf{B}_{11,21}=\mathbf{B}_{11,12}^T$.

\subsubsection{Block $\mathbf{B}_{11,11}$}
\label{sec:B11-11}

We now need to find a symmetric matrix $\mathbf{B}_{11,11}$ such that \eqref{eq:B11-cond-1} is satisfied, i.e.,
\begin{equation}
\mathbf{J}_1\mathbf{B}_{11,11}=-Y_0\mathbf{R}_1-\mathbf{J}_2\mathbf{B}_{11,21},\label{eq:B11-cond-1-2}
\end{equation}
where $\mathbf{B}_{11,21}=\mathbf{B}_{11,12}^T$, with $\mathbf{B}_{11,12}$ set according to \eqref{eq:B1112ji}.
To solve the linear matrix equation in \eqref{eq:B11-cond-1-2}, we can apply Corollary~\ref{cor:symm-eq} as done in Section~\ref{sec:B22-11} since $\mathbf{J}_1$ has dimensionality $N_S\times (N_S-1)$ and is full column rank with probability 1 given random channel realizations $\mathbf{H}$.
Therefore, \eqref{eq:B11-cond-1-2} has a symmetric solution $\mathbf{B}_{11,11}$ if and only if
\begin{equation}
\mathbf{J}_1\left(Y_0\mathbf{R}_1+\mathbf{J}_2\mathbf{B}_{11,21}\right)^T=\left(Y_0\mathbf{R}_1+\mathbf{J}_2\mathbf{B}_{11,21}\right)\mathbf{J}_1^T,\label{eq:cond-cor-rx}
\end{equation}
which can be shown similarly as we have shown \eqref{eq:cond-cor-tx} in Section~\ref{sec:B22-11}.
Since \eqref{eq:cond-cor-rx} holds, Corollary~\ref{cor:symm-eq} states that \eqref{eq:B11-cond-1-2} has a unique symmetric solution given by
\begin{equation}
\mathbf{B}_{11,11}=-\mathbf{V}_{J}\mathbf{\Sigma}_{J}^{-1}\mathbf{U}_{J,1}^T\left(Y_0\mathbf{R}_1+\mathbf{J}_2\mathbf{B}_{11,21}\right),
\end{equation}
as a function of the \gls{svd} of $\mathbf{J}_1$ in \eqref{eq:svd-J1}.

Given the blocks $\mathbf{B}_{11,12}$, $\mathbf{B}_{11,21}$, and $\mathbf{B}_{11,22}$ derived in Section~\ref{sec:B11-blocks}, and the block $\mathbf{B}_{11,11}$ derived in Section~\ref{sec:B11-11}, we can now assemble $\mathbf{B}_{11}$ according to \eqref{eq:B11-part}.

\begin{figure}[t]
\centering
\subfigure[SNR $P_T/\sigma^2=0$~dB.]{
\includegraphics[height=0.30\textwidth]{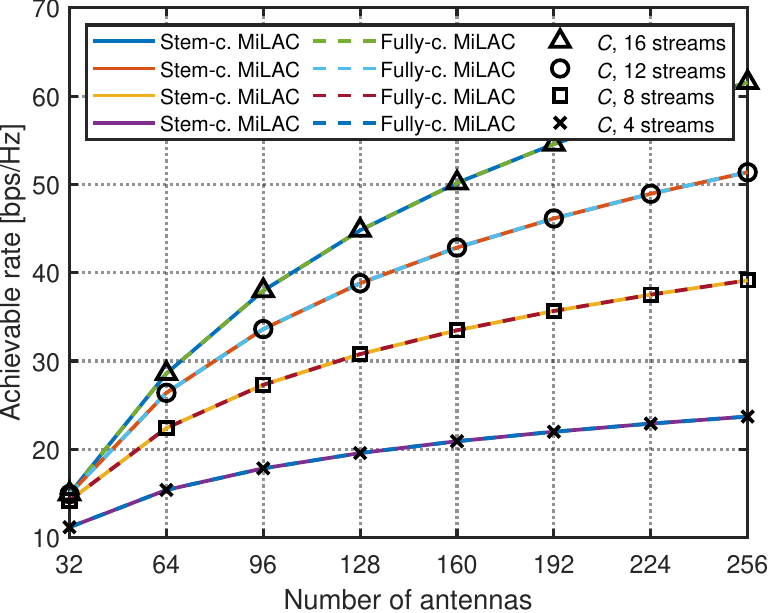}}
\subfigure[SNR $P_T/\sigma^2=10$~dB.]{
\includegraphics[height=0.30\textwidth]{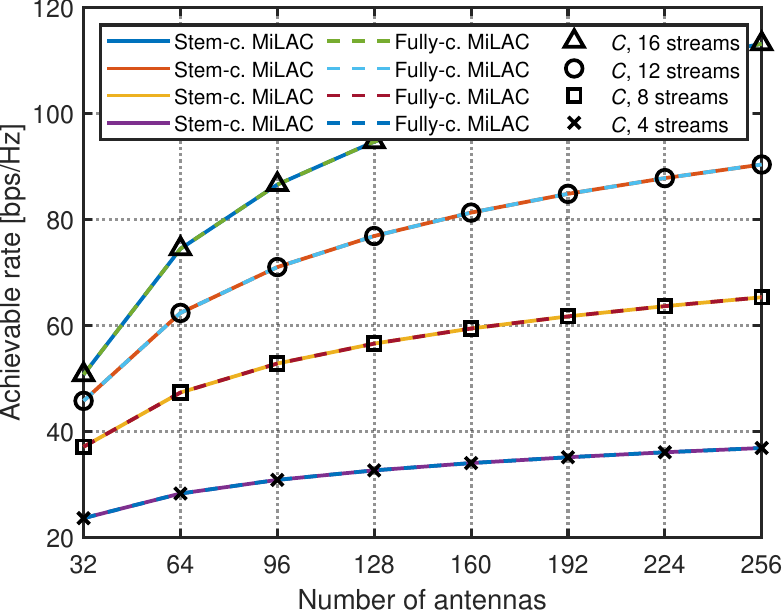}}
\caption{Achievable rate versus the number of antennas $N_T=N_R$, for different numbers of streams $N_S$, and values of SNR $P_T/\sigma^2$.
The achievable rate of stem- and fully-connected MiLAC, and the capacity in \eqref{eq:Cstar} are compared.}
\label{fig:capacity}
\end{figure}

\subsection{Blocks $\mathbf{B}_{12}$, $\mathbf{B}_{21}$, and $\mathbf{B}_{22}$}

With the block $\mathbf{B}_{11}$ derived in Section~\ref{sec:B11}, we can compute $\mathbf{B}_{21}$ by using \eqref{eq:rx-case3}, i.e.,
\begin{equation}
\mathbf{B}_{21}=Y_0\mathbf{J}-\mathbf{R}\mathbf{B}_{11},\label{eq:rx-B21}
\end{equation}
and, recalling that $\mathbf{B}$ is symmetric, we also have
\begin{equation}
\mathbf{B}_{12}=\mathbf{B}_{21}^T.\label{eq:rx-B12}
\end{equation}
Then, the remaining block $\mathbf{B}_{22}$ can be found through \eqref{eq:rx-case4} as
\begin{equation}
\mathbf{B}_{22}=-\mathbf{R}\mathbf{B}_{12},\label{eq:rx-B22}
\end{equation}
with $\mathbf{B}_{12}$ given by \eqref{eq:rx-B12}.

Finally, $\mathbf{B}$ is readily given by \eqref{eq:B-part} since we have derived all its four blocks.
In Alg.~\ref{alg:rx}, we summarize the steps leading to a capacity-achieving solution for the admittance matrix $\mathbf{B}$ of a stem-connected \gls{milac} at the receiver.
As discussed for a transmitter-side \gls{milac}, the computational complexity required to reconfigure a receiver-side \gls{milac} at each channel coherence time is driven by the computation of the \gls{svd} of $\mathbf{H}$, i.e., $\mathcal{O}(N_T^2N_R)$ if $N_T\geq N_R$ or vice versa.

\subsection{Checking Solution Optimality}
\label{sec:rx-check}

The solution in Alg.~\ref{alg:rx} is guaranteed by design to fulfill all the constraints of the feasibility check problem \eqref{eq:rx-prob}-\eqref{eq:rx-case4} except for two constraints that have not been used throughout our solution, namely \textit{i)} $\mathbf{B}_{22}=\mathbf{B}_{22}^T$ and \textit{ii)} $\mathbf{J}\mathbf{B}_{12}=Y_0\mathbf{I}_{N_S}$.
Interestingly, also these two constraints are satisfied, as it can be shown by proofs similar to those in Section~\ref{sec:tx-check}, which are omitted for conciseness.

\begin{figure}[t]
\centering
\includegraphics[height=0.30\textwidth]{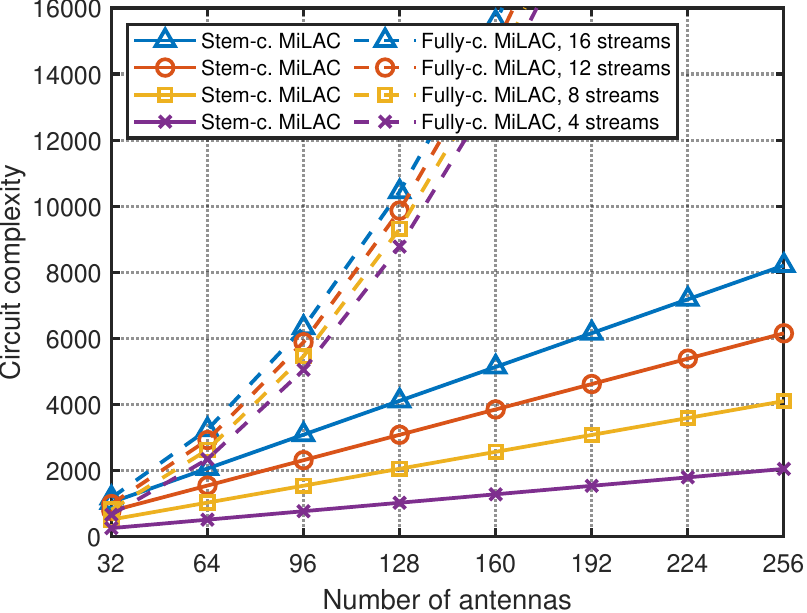}
\caption{Circuit complexity of stem- and fully-connected MiLAC versus the number of antennas ($N_T$ or $N_R$), for different numbers of streams $N_S$.}
\label{fig:complexity}
\end{figure}

\section{Numerical Results}
\label{sec:results}

We have proposed a novel family of \gls{milac} architectures, named stem-connected \gls{milac}, and provided a closed-form capacity-achieving solution to optimize it at the transmitter and receiver sides in Sections~\ref{sec:tx} and \ref{sec:rx}, respectively.
In this section, we validate the optimality of the proposed solution through Monte Carlo simulations and compare stem- and fully-connected \glspl{milac} in terms of performance and circuit complexity.

In Fig.~\ref{fig:capacity}, we report the achievable rate obtained with stem-connected and fully-connected \glspl{milac} averaged over sufficient realizations of \gls{iid} Rayleigh fading channels
\footnote{Although we consider Rayleigh fading channels, the optimization algorithms in Sections~\ref{sec:tx} and \ref{sec:rx} are proved to be globally optimal for any channel model, including more practical dual-polarized channels by properly adapting the number of \gls{milac} ports to the number of antennas.}
, i.e., $\text{vec}(\mathbf{H})\sim\mathcal{CN}(\mathbf{0}_{N_RN_T\times 1},\mathbf{I}_{N_RN_T\times 1})$.
We consider different scenarios where $N_R=N_T$, the number of streams is $N_S\in\{4,8,12,16\}$ and the \gls{snr} is $P_T/\sigma^2\in\{0,10\}$~dB.
For each scenario, we compare: the achievable rate with stem-connected \glspl{milac} at the transmitter and receiver sides optimized with the solutions proposed in Sections~\ref{sec:tx} and \ref{sec:rx}, the achievable rate with fully-connected \glspl{milac} optimized as proposed in \cite{ner25-3}, and the capacity in \eqref{eq:Cstar}.
In agreement with our theoretical analysis, we observe that stem-connected \glspl{milac} achieve the same rate as fully-connected \glspl{milac}, corresponding to the capacity derived in \eqref{eq:Cstar}.
Thus, while maintaining the maximum performance, stem-connected \glspl{milac} can provide enormous gains in terms of circuit complexity.

In Fig.~\ref{fig:complexity}, we report the circuit complexity, i.e., the number of admittance components, of a stem-connected \glspl{milac} $N_C^\text{Stem}$ with center size $Q=2N_S-1$ given by \eqref{eq:stem-tx-complexity}, and of a fully-connected \gls{milac} $N_C^\text{Fully}$ given by \eqref{eq:fully-tx-complexity}.
The circuit complexity is reported versus the number of antennas ($N_T$ for a \gls{milac} at the transmitter or $N_R$ for a \gls{milac} at the receiver), for different values of the number of streams $N_S$.
We make the following three observations.
\textit{First}, the circuit complexity of both fully- and stem-connected \glspl{milac} increases with the number of streams and the number of antennas.
\textit{Second}, with a fixed number of streams $N_S$, the circuit complexity of a fully-connected \gls{milac} scales quadratically with the number of antennas, i.e., $N_C^\text{Fully}=\mathcal{O}((N_S+N_T)^2)$ or $N_C^\text{Fully}=\mathcal{O}((N_S+N_R)^2)$ for a \gls{milac} at the transmitter or at the receiver, respectively.
\textit{Third}, with a fixed $N_S$, the circuit complexity of a stem-connected \gls{milac} scales linearly with the number of antennas, i.e., $N_C^\text{Stem}=\mathcal{O}(N_SN_T)$ or $N_C^\text{Stem}=\mathcal{O}(N_SN_R)$ for a \gls{milac} at the transmitter or at the receiver, respectively.
Thus, a stem-connected \gls{milac} can achieve the same capacity as a fully-connected \gls{milac}, simplifying significantly the hardware when the number of antennas grows large.
This gain in terms of circuit complexity of stem- over fully-connected \gls{milac} is expected to also reflect the gain in power consumption.
This is because the power consumption of a \gls{milac} is due to the control voltages used to reconfigure the tunable admittance components, and therefore scales with the number of tunable admittance components.

\section{Conclusion}
\label{sec:conclusion}

The concept of \gls{milac} has recently emerged to enable gigantic \gls{mimo} by performing signal processing and beamforming entirely in the analog domain via multiport microwave networks.
However, previous work has explored \gls{milac} architectures, denoted as fully-connected \glspl{milac}, which include tunable impedance components interconnecting all \gls{milac} ports to each other.
Hence, their circuit complexity, measured in terms of the number of required impedance components, scales quadratically with the number of antennas, limiting their practicality.
In this paper, we address this scalability issue by proposing a graph theoretical model of \gls{milac} that allows us to explore lower-complexity \gls{milac} architectures.
We model a \gls{milac} as a graph, whose vertices represent the \gls{milac} ports and edges represent the presence of impedance components interconnecting the ports.
Through this graph theoretical model, we characterize a family of \gls{milac} architectures, denoted as stem-connected \glspl{milac}, which maintains the capacity-achieving property of fully-connected \glspl{milac}, while significantly reducing the circuit complexity.
To show that stem-connected \glspl{milac} are capacity-achieving, we optimize them through a closed-form solution that is proven to globally maximize the achievable rate for any channel realization, i.e., to reach the capacity.
Simulation results support our theoretical analysis and demonstrate that stem-connected \glspl{milac} can achieve the same capacity as fully-connected \glspl{milac}, which corresponds to the capacity achievable with a fully-digital \gls{mimo} system with the same number of streams.
At the same time, stem-connected \glspl{milac} are characterized by a circuit complexity scaling linearly with the number of antennas, rather than quadratically, making them a practical and scalable solution to enable high-performance gigantic \gls{mimo} systems.

In this study, we have proposed low-complexity \gls{milac} architectures that are proven to achieve capacity in point-to-point \gls{mimo} systems, and we have developed an algorithm to globally optimize them.
We identify several interesting future research directions, both theoretical and practical.
A promising theoretical direction is to investigate whether the proposed \gls{milac} architectures remain optimal in multi-user communication systems, and to develop optimization strategies for such scenarios.
Since we have proved that $Q=2N_S-1$ is a sufficient condition for a stem-connected \gls{milac} to be capacity achieving, it would be also interesting to prove that this condition is also necessary, or alternatively, to design capacity-achieving architectures with even lower circuit complexity.
Additional directions could include the development of more realistic models for \gls{milac}, e.g., accounting for lossy \gls{rf} components and for components whose values are restricted to a limited range.
The optimal \gls{milac} architectures and their optimization under such non-ideal conditions could be investigated.
Finally, future research should explore the prototyping of a small-scale, stem-connected \gls{milac}, to demonstrate its practical benefits.
Implementation challenges include:
\textit{i)} ensuring an accurate tuning of the admittance components to the target values,
\textit{ii)} mitigating parasitic effects and mutual coupling among closely spaced interconnections, and
\textit{iii)} preserving component linearity and stability across different frequencies, particularly important in wideband communications.

\bibliographystyle{IEEEtran}
\bibliography{IEEEabrv,main}

\end{document}